\newtheorem{theo}{\textbf{Theorem}}
\newtheorem{prop}{\textbf{Proposition}}
\newtheorem{lem}{\textbf{Lemma}}
\theoremstyle{definition}
\newtheorem{defi}{\textbf{Definition}}
\newtheorem{ex}{Example}
\newtheorem{remk}{Remark}
\newcommand*{\Scale}[2][4]{\scalebox{#1}{$#2$}}
\begin{document}
\title{Topological Interference Management with \\ Reconfigurable Antennas}

\author{Heecheol Yang$^{*}$, Navid Naderializadeh$^{\dagger}$, A. Salman Avestimehr$^{\dagger}$, and Jungwoo Lee$^{*}$ \\
$^{*}$Dept. of Electrical and Computer Engineering, Seoul National University, Seoul, Korea \\
$^{\dagger}$Dept. of Electrical Engineering, University of Southern California \\ 
E-mail: hee2070@snu.ac.kr, naderial@usc.edu, avestimehr@ee.usc.edu, junglee@snu.ac.kr \vspace{-3mm}
}

\maketitle

\begin{abstract}
We study the symmetric degrees-of-freedom (DoF) of partially connected interference networks under linear coding strategies at transmitters without channel state information beyond topology. We assume that the receivers are equipped with reconfigurable antennas that can switch among their preset modes. In such a network setting, we characterize the class of network topologies in which half linear symmetric DoF is achievable. Moreover, we derive a general upper bound on the linear symmetric DoF for arbitrary network topologies. We also show that this upper bound is tight if the transmitters have at most two co-interferers.
\end{abstract}

\begin{keywords}
interference management, network topology, degrees of freedom, reconfigurable antenna.
\end{keywords}

\section{Introduction}
\label{sec_intro}
As wireless networks grow in size and get more complex, interference becomes a more challenging bottleneck to deal with. As a result, there has been a significant amount of research on developing communication strategies that exploit channel state information to efficiently manage the interference and increase network throughput. However, obtaining channel state information at transmitters (CSIT) is a burdensome task on communication systems. Thus, there has been a growing attention on interference management with a minimal effort to obtain CSIT.

Our focus in this paper is on the scenario in which interference management primarily relies on a coarse knowledge about channel states in the network, namely the ``topology'' of the network. Network topology simply refers to a 1-bit feedback information for each link between each transmitter and each receiver, indicating whether or not the signal of the transmitter is received above the noise floor at the corresponding receiver. Quite interestingly, it has been shown in \cite{TIM1} that the problem of interference management in this model (a.k.a., topological interference management) is equivalent to the index coding problem \cite{TIM2}-\cite{TIM3}, under the assumption that channel gains remain constant throughout the duration of communication, i.e., quasi-static fading channel. This model has also been studied for the fast fading scenario in \cite{TIM4}, where a structured repetition coding scheme was proposed to neutralize interference by only resending symbols at the transmitters in a structured way according to network topology. 
There has also been a general outer bound developed for this problem in \cite{TIM5}, by converting the problem to determining rank loss conditions for an ensemble of full-rank matrices with randomly-scaled rows.

In this paper, we consider the aforementioned topological interference management problem in a scenario that the receivers are equipped with reconfigurable antennas. Reconfigurabe antennas at the receivers have been considered to provide a diversity gain, offering a choice to receive the signal on its different preset modes, each of which has an independent channel gain. They provide additional opportunities for interference management, in particular alignment of interference via preset mode switching at the receivers. More specifically, with preset mode switching we can adjust the channel gains to change according to a desired pattern across time, hence increasing the flexibility for alignment of interference.

This problem of interference management with reconfigurable antennas at the receivers was first proposed in \cite{Gou} for $K$-user multiple-input single-output broadcast channels. It was extended to 3-user single-input single-output interference channels in \cite{IC2}. In particular, the author showed the condition for precoding matrices to align interference over time at the unintended receivers without knowledge about channel gain values at the transmitters. This approach was also applied to fully-connected $K$-user SISO interference channels in \cite{IC3}.

We present our results on the topological interference management problem in a scenario that the changing patterns of the channels can be designed as desired according to the network topology by reconfigurable antenna switching. Our contributions in this paper are three-fold. We first characterize the necessary and sufficient conditions for the topologies in which half linear symmetric degrees-of-freedom (DoF) is achievable. This corresponds to the ``best'' topologies since half symmetric DoF is the highest possible value for the symmetric DoF in the presence of interference. We next derive a general upper bound on the linear symmetric DoF for arbitrary network topologies.
We finally demonstrate that this upper bound is tight in network topologies in which each transmitter has at most two co-interferers, and we propose an optimal linear scheme for such network topologies.

We show that the benefits from network topology knowledge at the transmitters and preset mode switching at the receivers can be simultaneously exploited to achieve DoF gain for a large class of network topologies. We demonstrate that the desired changing patterns of the receiver preset modes provide an additional DoF gain, compared to the scenario where channel gains remain constant over time \cite{TIM1}. In particular, we highlight the distinction between the two scenarios by comparing the class of the best topologies and the upper bound on the symmetric DoF.

The rest of this paper is organized as follows.
In Section \ref{sec_model}, we introduce the system model. 
In Section \ref{sec_results}, we state our main results and their implications.
In Section \ref{sec_half}, we characterize the class of the ``best'' network topologies.
In Section \ref{sec_upper}, we derive the upper bound on the linear symmetric DoF for general network topologies.
In Section \ref{sec_2_coint}, we propose a new linear scheme for the class of the topologies with at most two co-interferers.
Finally, in Section \ref{sec_con}, we conclude this paper.

\emph{Notation}:
For a vector $\mathbf{a}$, 
$\textrm{diag}(\mathbf{a})$ represents a diagonal matrix whose diagonal entries are elements of $\mathbf{a}$.
For matrices $\mathbf{A}$ and $\mathbf{B}$,
$\textrm{dim}(\mathbf{A} \cap \mathbf{B})$ and $\textrm{dim}(\mathbf{A}, \mathbf{B})$ represent the dimensions of the intersection and the sum-space of $\textrm{span}(\mathbf{A})$ and $\textrm{span}(\mathbf{B})$, respectively.
For $a,b \in \mathbb{N}$, $[a:b]$ denotes $\{a, a+1, \ldots, b\}$. 

\section{System Model}
\label{sec_model}
Consider a $K$-user interference network consisting of $K$ transmitters ($\text{T}_{i}, i \in [1:K]$) that have a single transmit antenna each and $K$ receivers ($\text{R}_{j}, j \in [1:K]$) equipped with a reconfigurable antenna that can switch among its preset modes, each of which has an independent channel gain vector to all the transmitters. The preset mode of $\text{R}_{j}$ at time $t$ is denoted by $l_{j}(t)$.

We assume that channel gains remain constant throughout the course of communication, i.e. quasi-static fading channel.
In this scenario, since the channel gain at each time depends on the preset mode at the receiver, we represent the channel gain value from $\text{T}_{i}$ to $\text{R}_{j}$ at time $t$ as $h_{j,i}(l_{j}(t))\in \mathbb{C}$.
We also denote the channel matrix from $\text{T}_{i}$ to $\text{R}_{j}$ over $m$ channel uses as $$\mathbf{H}^{m}_{j,i}=\textrm{diag}\left([h_{j,i}(l_{j}(1)) ~ \ldots ~ h_{j,i}(l_{j}(m))]\right).$$

If the signal of $\text{T}_{i}$ is received at $\text{R}_{j}$ below the noise level, we ignore the link between them since we intend to characterize the network capacity at high SNR regime using the DoF metric. The connectivity pattern of the network, i.e., the set of links over which the transmitted signal is received above the noise level, is referred to as the \emph{network topology}. Therefore, the network graph is considered to be \emph{partially connected} in which some of the links are ignored.

We assume that each transmitter $\text{T}_{i}, i\in[1:K]$ intends to send a vector $\mathbf{s}_{i}=[s_{i,1} ~ \cdots ~ s_{i,n_{i}(m)}]^{T} \in \mathbb{C}^{n_{i}(m) \times 1}$ of $n_{i}(m)$ symbols to receiver $\text{R}_{i}$ over $m$ channel uses. In this paper, we restrict the transmission scheme to linear coding strategies at the transmitters, thus each transmitter $\text{T}_{i}$ sends
$\mathbf{x}^{m}_{i}=\mathbf{V}^{m}_{i}\mathbf{s}_{i}$ over $m$ channel uses,
where $\mathbf{V}^{m}_{i}=[\mathbf{v}^{m}_{i,1} ~\ldots ~\mathbf{v}^{m}_{i,n_{i}(m)}] \in \mathbb{C}^{m \times n_{i}(m)}$ is the beamforming matrix of $\text{T}_{i}$.
The received signal at $\text{R}_{j}$ over $m$ channel uses is given by
\begin{eqnarray}
\mathbf{y}^{m}_{j}=\mathbf{H}^{m}_{j,j}\mathbf{x}^{m}_{j}
+ \sum\nolimits_{i \in \mathcal{I}_j} \mathbf{H}^{m}_{j,i}\mathbf{x}^{m}_{i}+\mathbf{z}^{m}_{j},
\end{eqnarray}
where $\mathcal{I}_j$ is the set of transmitters interfering at $\text{R}_{j}$ and $\mathbf{z}^{m}_{j} \in \mathbb{C}^{m \times 1}$ is the additive white Gaussian noise vector over $m$ channel uses, whose entries are independent, each of which distributed as $\mathcal{CN}(0,1)$.

We assume that the transmitters have no knowledge about the channel values, but are only aware of the network topology, or equivalently $\{\mathcal{I}_j\}_{j=1}^{K}$. On the contrary, receivers know not only the network topology but also perfect channel state information. At the receivers, interfering signals are aligned through predetermined order of antenna switching, referred to as the \emph{preset mode pattern}, which depends on the network topology. 
We denote the preset mode pattern of $\text{R}_{j}$ during $m$ channel uses by $\mathbf{L}^{m}_{j}=[l_{j}(1) \ldots l_{j}(m)]$.
Since the channel value varies solely depending on the receiver's preset mode pattern, the channel values of the links towards the same receiver have the identical changing pattern.

Lastly, the linear DoF (LDoF) tuple $(d_{1},\ldots,d_{K})$ is said to be achievable if there exists a set of beamforming vectors and preset mode patterns for $j \in [1:K]$ almost surely, satisfying
\begin{eqnarray}
\label{D_tuple}
\begin{array}{c}
\hspace{-3mm} \textrm{dim}\left(\textrm{Proj}_{\mathcal{N}^{c}_{j}}\textrm{span}(\mathbf{H}^{m}_{j,j}\mathbf{V}^{m}_{j})\right) \geq d_{j}(m), \\ d_{j}=\underset{m\to\infty}{\textrm{lim}}\frac{d_{j}(m)}{m},
\end{array} \hspace{-5mm}
\end{eqnarray}
where $\textrm{Proj}_{\mathcal{A}^{c}}\mathcal{B}$ denotes the vector space induced by projecting $\mathcal{B}$ onto the orthogonal complement of $\mathcal{A}$ and $\mathcal{N}_{j}$ is the interference signal subspace at $\text{R}_{j}$ as 
\begin{eqnarray}
\mathcal{N}_{j}&=&\underset{i \in \mathcal{I}_{j}}{\bigcup} \textrm{span}(\mathbf{H}^{m}_{j,i}\mathbf{V}^{m}_{i}).
\end{eqnarray}

The linear symmetric DoF, denoted by $\textrm{LDoF}_{\textrm{sym}}$, is defined as the supremum $d$ for which the LDoF tuple $(d,\ldots,d)$ is achievable.

\section{Main Results}
\label{sec_results}
To present the main results of this paper, we first define the notions of the alignment and conflict graphs for a given network topology, which were first defined in \cite{TIM1}.

\begin{defi}
The \emph{alignment graph} is an undirected graph with vertex set $[1:K]$, where vertices $i$ and $j$ ($i\neq j$) are connected with a solid black edge if and only if $\text{T}_{i}$ and $\text{T}_{j}$ are connected to $\text{R}_{k}$ ($k \notin \{i, j\}$); i.e., $\{i, j\}\subseteq\mathcal{I}_{k}$.
\end{defi}

\begin{defi}
The \emph{conflict graph} is a directed graph with vertex set $[1:K]$, where vertex $i$ is connected to vertex $j$ ($i\neq j$) with a dotted red edge if and only if $\text{T}_{i}$ is connected to $\text{R}_{j}$; i.e., $i\in\mathcal{I}_{j}$.
\end{defi}

\noindent In contrast to the conflict graph defined in \cite{TIM1}, we assign a direction to the conflict edges. We also use the notions of alignment sets, internal conflicts, and conflict distance defined in \cite{TIM1}.

\begin{defi}
An \emph{alignment set} is defined as a set of vertices connected through alignment edges.
\end{defi}

\begin{defi}
An \emph{internal conflict} is defined as a conflict edge between two vertices that belong to the same alignment set.
\end{defi}

\begin{defi}
For two vertices that have an internal conflict between them, the \emph{conflict distance} is defined as the minimum number of alignment edges that are needed to be traversed to go from one vertex to the other.
\end{defi}

Finally, we define the notions of co-interferers and internal conflict cycles. 

\begin{defi}
For a transmitter $\text{T}_{i}$, its set of \emph{co-interferers}, denoted by $\widehat{\mathcal{T}}_{i}$, is defined as the set of transmitters interfering together with $\text{T}_{i}$ at one of its unintended receivers. To be precise,
\begin{eqnarray}
\widehat{\mathcal{T}}_{i} = \underset{i \in \mathcal{I}_{j}}{\bigcup} \mathcal{I}_{j} \setminus \{i\}.
\end{eqnarray}
\end{defi}

\begin{defi}
An \emph{internal conflict cycle} is defined as a directed cycle of conflict edges among nodes all of which receive interference from a single transmitter outside the cycle.
\end{defi}

Based on the above definitions, our main results are presented as follows.

\begin{theo}[Half linear symmetric DoF]\label{thm1}
For partially connected $K$-user interference networks with reconfigurable antenna switching between at least two modes at the receivers, half linear symmetric DoF can be achieved if and only if all vertices in the alignment and conflict graphs of the network have less than two incoming internal conflicts.
\end{theo}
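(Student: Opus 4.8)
The plan is to prove the two implications separately, treating the achievability (``if'') direction as the main constructive effort and the converse (``only if'') as a dimension-counting outer bound. Throughout I would use the structural observation that, for any receiver $\text{R}_j$, every pair of transmitters in $\mathcal{I}_j$ is joined by an alignment edge, so the entire interfering set $\mathcal{I}_j$ lies in a single alignment set; an incoming internal conflict at vertex $j$ is therefore precisely the event that the desired transmitter $\text{T}_j$ shares an alignment set with one of its own interferers. The ``less than two incoming internal conflicts'' hypothesis thus bounds how entangled each desired signal is with the very interference it must be separated from.

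For sufficiency I would exhibit a linear scheme operating over two channel uses, so that one symbol per user already yields $d = 1/2$, using two-mode switching at the receivers. First I would decompose the network into alignment sets and, within each set, propagate a common interference-alignment structure along a spanning tree of the alignment graph, so that at every receiver the contributions of all of its interferers collapse onto a single ($m/2$-dimensional, here one-dimensional) subspace. The remaining ingredient is to place each desired signal in the complementary direction: at a receiver carrying no incoming internal conflict this is automatic, since $\text{T}_j$ then lies in a different alignment set and its beamforming direction is generic relative to the interference; at a receiver carrying its single admissible incoming internal conflict I would spend the one degree of freedom provided by switching $\text{R}_j$ between its two preset modes to rotate the desired direction off the aligned interference subspace. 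I would close this direction by checking that all the required separations hold almost surely, invoking the genericity of the mode-dependent, mutually independent channel gains.

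For necessity I would argue by contraposition: assuming half symmetric DoF is achievable while some vertex $j$ carries two incoming internal conflicts, from transmitters $i_1$ and $i_2$, I would derive a contradiction. Half DoF forces, at every receiver, all interference to align into an $m/2$-dimensional subspace, giving a system of subspace equalities $\textrm{span}(\mathbf{H}^{m}_{k,i}\mathbf{V}^{m}_{i}) = \textrm{span}(\mathbf{H}^{m}_{k,i'}\mathbf{V}^{m}_{i'})$ across all alignment edges, while the desired subspace $\textrm{span}(\mathbf{H}^{m}_{j,j}\mathbf{V}^{m}_{j})$ must be complementary to the aligned interference at $\text{R}_j$. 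Walking the alignment chains from $j$ to $i_1$ and from $j$ to $i_2$ and composing the per-edge diagonal relations, I would express both interferers' images at $\text{R}_j$ as independently scaled copies of the single subspace $W = \textrm{span}(\mathbf{V}^{m}_{j})$. Their required coincidence then forces $W$ to be invariant under a diagonal matrix built from mutually independent channel gains, which in turn pins the desired image $\mathbf{H}^{m}_{j,j}W$ inside the aligned interference subspace; this is the contradiction.

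The step I expect to be the main obstacle is making the necessity argument robust against the receivers' freedom to choose their preset mode patterns. Since the diagonal channel matrices at $\text{R}_j$ and along the alignment chains repeat their entries whenever a mode is reused, an adversarial mode schedule can block the ``invariant subspace must be a coordinate subspace'' conclusion from applying coordinate-by-coordinate; the genericity argument must therefore be carried out block-by-block across the eigenspaces induced by the chosen modes and shown to fail simultaneously for every admissible schedule. On the achievability side the parallel difficulty is verifying that a single spanning-tree alignment can be made globally consistent across all of a set's receivers while still leaving each desired signal one free direction to occupy, i.e., that the lone mode-switching degree of freedom per receiver is neither wasted nor double-booked. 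Matching these two thresholds---one rotation suffices for one conflict, two conflicts are one too many---is what ties the construction to the converse and yields the exact characterization.
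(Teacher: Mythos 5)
Your achievability direction is essentially the paper's own scheme and works: assigning one common beamforming vector per alignment set (your spanning-tree propagation amounts to exactly this), choosing these vectors pairwise independent across sets, holding the preset mode constant at receivers with no incoming internal conflict, and spending the mode switch at a receiver with one incoming internal conflict --- which, as your clique observation shows, is then interfered by exactly that one transmitter. The problem is in your converse.

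The gap is that you keep the diagonal twists when composing the per-edge relations and defer genericity to the very end, applying it only to the composed matrix; but that composed matrix need not be generic --- it can be identically the identity, because the coincidence you want to exploit at $\text{R}_j$ is itself one of the per-edge relations. Note that the two conflicting interferers $i_1,i_2$ at $\text{R}_j$ are always joined by an alignment edge \emph{witnessed by $\text{R}_j$ itself}, so any chain from $j$ to $i_2$ that uses this edge makes your final "coincidence" constraint a tautology. Concretely, take $K=3$ with $\mathcal{I}_1=\{2,3\}$, $\mathcal{I}_3=\{1,2\}$, $\mathcal{I}_2=\emptyset$: vertex $1$ has two incoming internal conflicts, so the theorem forbids half DoF, yet the alignment graph is the path $1\!-\!2\!-\!3$ with edge $(1,2)$ witnessed by $\text{R}_3$ and edge $(2,3)$ witnessed by $\text{R}_1$. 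Composing the un-stripped relations gives $\textrm{span}(\mathbf{V}^{m}_{2})=(\mathbf{H}^{m}_{3,2})^{-1}\mathbf{H}^{m}_{3,1}W$ and $\textrm{span}(\mathbf{V}^{m}_{3})=(\mathbf{H}^{m}_{1,3})^{-1}\mathbf{H}^{m}_{1,2}(\mathbf{H}^{m}_{3,2})^{-1}\mathbf{H}^{m}_{3,1}W$, so both interference images at $\text{R}_1$ equal $\mathbf{H}^{m}_{1,2}(\mathbf{H}^{m}_{3,2})^{-1}\mathbf{H}^{m}_{3,1}W$: their coincidence is automatic, your "diagonal matrix built from mutually independent channel gains" degenerates to $\mathbf{I}$, and no contradiction follows. (Indeed, without the no-CSIT restriction, channel-dependent beamformers satisfying all these relations with separable desired signals do exist, so no argument that postpones genericity to the composed matrix can succeed.) The missing ingredient is the paper's Lemma 1 (from [IC2]): since the beamformers are generated independently of the channel values and the two channel matrices at any single receiver share the same changing pattern, alignment of the \emph{images} forces alignment of the \emph{raw} spans, $\textrm{span}(\mathbf{V}^{m}_{p})=\textrm{span}(\mathbf{V}^{m}_{q})$, with the twist stripped. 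Applying this edge by edge first makes every span in the alignment set equal to $W$; only then does the coincidence at $\text{R}_j$ read $\mathbf{H}^{m}_{j,i_1}W=\mathbf{H}^{m}_{j,i_2}W$, an invariance under $(\mathbf{H}^{m}_{j,i_1})^{-1}\mathbf{H}^{m}_{j,i_2}$ whose entries involve only the fresh gains $h_{j,i_1}(\cdot),h_{j,i_2}(\cdot)$ and cannot cancel; your block-eigenspace decomposition of $W$ along $\text{R}_j$'s mode blocks then correctly pins $\mathbf{H}^{m}_{j,j}W$ inside the interference (and this same block argument already disposes of the mode-repetition obstacle you flagged, which is the lesser of the two issues). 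In short: genericity must be spent per edge, not saved for the end.
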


\begin{remk}
The topologies in which half linear symmetric DoF is achievable correspond to the ``best'' topologies since half symmetric DoF is the highest possible value for the symmetric DoF in the presence of interference (i.e., as long as at least one interference link from $\text{T}_{i}$ to $\text{R}_{j}$ ($i\neq j$) exists in the network topology). Therefore, Theorem \ref{thm1} characterizes the best topologies for topological interference management with reconfigurable antennas at the receivers.
\end{remk}

\begin{remk}
Without reconfigurable antennas, it was shown in \cite{TIM1} that half linear symmetric DoF is achievable if and only if there is no internal conflict in the alignment and conflict graphs of the network, which is strictly more restrictive than the above condition in Theorem 1. Therefore, Theorem 1 demonstrates that the class of ``best'' network topologies is extended by utilizing preset mode switching of reconfigurable antennas at the receivers. In particular, if receivers are equipped with reconfigurable antennas, each receiver can distinguish its desired signal by preset mode switching although an interferer has the identical beamforming vector with its corresponding transmitter. Thus, if each vertex has at most one incoming internal conflict on the alignment and conflict graphs of the network, half linear symmetric DoF is achievable.
\end{remk}

\begin{theo}[Upper bound on the linear symmetric DoF for general topologies]\label{thm2}
For partially connected $K$-user interference networks with reconfigurable antenna switching among any number of preset modes at the receivers, the linear symmetric DoF is upper-bounded by
\begin{eqnarray}
\emph{LDoF}_{\emph{sym}} \leq \emph{min}\left(\frac{\Delta_{\emph{min}}+1}{2\Delta_{\emph{min}}+3},\frac{2L_{\emph{min,odd}}}{5L_{\emph{min,odd}}+1}\right),
\end{eqnarray}
where $\Delta_{\emph{min}}$ is the minimum conflict distance among the internal conflicts towards any vertex $j \in \mathcal{B}$, $\mathcal{B}$ is the set of vertices which have two or more incoming internal conflicts, and $L_{\emph{min,odd}}$ is the length of the shortest internal conflict cycle of odd length.
\end{theo}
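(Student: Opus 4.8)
The plan is to establish the two terms of the bound independently, since the right-hand side is a minimum. For either term I would begin by fixing an arbitrary linear scheme, i.e. beamforming matrices $\{\mathbf{V}^m_i\}$ and preset-mode patterns $\{\mathbf{L}^m_j\}$ (designed from topology alone), that achieves symmetric $\textrm{LDoF}$ equal to $d$; thus each $\mathbf{V}^m_i$ carries $d_i(m)\approx dm$ columns, and at every $\text{R}_j$ the desired image $\textrm{span}(\mathbf{H}^m_{j,j}\mathbf{V}^m_j)$ must be essentially disjoint from the interference subspace $\mathcal{N}_j$ while spanning $\approx dm$ dimensions. From the achievability requirement I would extract two elementary facts. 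First, a \emph{separability} budget at each receiver: $\textrm{dim}(\textrm{span}(\mathbf{H}^m_{j,j}\mathbf{V}^m_j),\mathcal{N}_j)\le m$ together with $\textrm{dim}(\textrm{span}(\mathbf{H}^m_{j,j}\mathbf{V}^m_j)\cap\mathcal{N}_j)\approx 0$. Second, an \emph{alignment} fact: for almost every channel realization, the images $\mathbf{H}^m_{k,i}\mathbf{V}^m_i$ and $\mathbf{H}^m_{k,i'}\mathbf{V}^m_{i'}$ can share dimensions at $\text{R}_k$ only when $\{i,i'\}\subseteq\mathcal{I}_k$, and then only up to the overlap built into the designed beamformers and the shared mode pattern $\mathbf{L}^m_k$.

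For the first term I would isolate a vertex $j^\star\in\mathcal{B}$ together with the incoming internal conflict realizing $\Delta_{\min}$, the shortest alignment path $a=u_0-u_1-\cdots-u_{\Delta_{\min}}=j^\star$ witnessing the conflict distance, and the second incoming internal conflict that places $j^\star$ in $\mathcal{B}$. The idea is to propagate alignment along the path: at the common receiver of each alignment edge $u_t-u_{t+1}$ the interference image of $\text{T}_{u_t}$ is forced to lie in the sum-space of the images of its neighbours, so the subspaces along the chain form a nested family of controlled overlaps linking $\text{T}_a$ to $\text{T}_{j^\star}$. At $\text{R}_{j^\star}$ the two incoming conflicts require the desired image of $\text{T}_{j^\star}$ to be separable from both interferers; because those interferers are tied to $\text{T}_{j^\star}$'s own subspace through the alignment chain, this separation consumes dimensions that accumulate with the length $\Delta_{\min}$. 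Summing the separability budgets over the receivers attached to the chain and to the two conflicts, roughly $2\Delta_{\min}+3$ of them, and substituting the chain overlaps should collapse to $d(2\Delta_{\min}+3)\le\Delta_{\min}+1$, which is the claimed term.

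For the second term I would isolate a shortest odd internal conflict cycle $v_1\to v_2\to\cdots\to v_L\to v_1$ with $L=L_{\min,\textrm{odd}}$ and its common external interferer $\text{T}_s$, where $s\in\mathcal{I}_{v_i}$ for every $i$. Since $\text{T}_s$ and $\text{T}_{v_{i-1}}$ both interfere at $\text{R}_{v_i}$, the image of $\text{T}_s$ furnishes a single alignment reference that is carried around the entire cycle, while each conflict edge $v_{i-1}\to v_i$ forces the image of $\text{T}_{v_{i-1}}$ to remain separable from the desired subspace of $\text{T}_{v_i}$. Tracing these coupled constraints once around the cycle and using that $L$ is odd produces a frustration: the alignment pattern that would close up consistently on an even cycle cannot do so here, so exactly one alignment saving is lost. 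Summing the separability budgets over the $L$ cycle receivers and the receiver of $\text{T}_s$, and charging the single overlap that the odd parity leaves unrecoverable, should yield $d(5L+1)\le 2L$.

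The step I expect to be the main obstacle is converting the informal statement that ``alignment ties the subspaces together'' into rigorous, almost-sure dimension identities under the reconfigurable-antenna model, in which all links into a given receiver share one preset-mode pattern yet are independent across modes and across transmitters. Concretely, I must show that for almost every realization of the per-mode gains no design of beamformers and mode patterns can compress the coupled interference below the dimension dictated by the combinatorial substructure, so that the overlaps I have counted are the most that any scheme can extract and no ``accidental'' alignment helps. Controlling all of these intersection dimensions simultaneously across the chosen substructure, rather than one edge at a time, is the delicate part; the odd-cycle case is the harder of the two because its obstruction is a global parity that must be extracted from the full cyclic system of constraints rather than from a single path.
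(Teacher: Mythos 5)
Your outline reproduces the paper's high-level architecture---the two terms are bounded separately, the first by propagating alignment along a shortest path into a vertex of $\mathcal{B}$, the second by an odd-cycle argument anchored at the common external interferer---but the two places where you defer the work are exactly where the proof lives, and as written your plan cannot close. The ``main obstacle'' you name is resolved in the paper by a specific channel-stripping lemma (its Lemma 1, imported from the blind-interference-alignment literature): since all links arriving at a receiver $\text{R}_k$ are modulated by the \emph{same} preset-mode pattern, the diagonal matrices $\mathbf{H}^m_{k,i}$ and $\mathbf{H}^m_{k,j}$ have identical changing patterns, and because the beamformers are designed from topology alone, $\mathbf{H}^m_{k,i}\mathbf{v}\in\textrm{span}(\mathbf{H}^m_{k,j}\mathbf{V}^m_j)$ forces $\mathbf{v}\in\textrm{span}(\mathbf{V}^m_j)$ almost surely. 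This is what transports every receiver-side constraint into the single, channel-free coordinate system of the $\mathbf{V}^m_i$'s, where dimensions coming from \emph{different} receivers can legitimately be added and compared; without it (or an equivalent statement), your per-receiver ``separability budgets'' live in incompatible, channel-transformed coordinates and cannot be summed at all.

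More seriously, your accounting omits the decodability-driven exclusion term that actually produces both constants. The paper's workhorse inequality is: for every receiver $s$ and every pair $\{p,q\}\subseteq\mathcal{I}_s$, $\dim(\mathbf{V}^m_p\cap\mathbf{V}^m_q)-\dim(\mathbf{V}^m_p\cap\mathbf{V}^m_q\cap\mathbf{V}^m_s)\geq(3\lambda-1)m$. The first term is your alignment budget, but the subtracted triple intersection arises because any aligned direction that also lies in $\textrm{span}(\mathbf{V}^m_s)$ would, after multiplication back by the channel, contaminate the desired subspace at $\text{R}_s$. This exclusion is indispensable: without it, the fully aligned choice $\mathbf{V}^m_i=\mathbf{V}^m_j=\mathbf{V}^m_k$ across an alignment set satisfies every pairwise alignment constraint in your outline at $\lambda=\tfrac{1}{2}$, so no summation of budgets can yield a contradiction, and your argument cannot produce any bound below $\tfrac{1}{2}$. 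With the exclusion term, the paper packs the chain-induction overlap $\dim(\mathbf{V}^m_k\cap\mathbf{V}^m_i)\geq((2\Delta_{\textrm{min}}+1)\lambda-\Delta_{\textrm{min}})m$ and the second-conflict quantity $\dim(\mathbf{V}^m_i\cap\mathbf{V}^m_j)-\dim(\mathbf{V}^m_i\cap\mathbf{V}^m_j\cap\mathbf{V}^m_k)\geq(3\lambda-1)m$ \emph{disjointly} inside the rank-$\lambda m$ space $\mathbf{V}^m_i$, giving $(2\Delta_{\textrm{min}}+3)\lambda\leq\Delta_{\textrm{min}}+1$; note the denominator comes from this packing, not from counting $2\Delta_{\textrm{min}}+3$ receivers. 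Likewise, the cycle bound comes from a union bound inside the common interferer's $\mathbf{V}^m_1$ over the odd pairing of cycle receivers, plus a telescoping lemma built precisely from those triple-intersection terms (and $\text{R}_s$'s own budget is never used). Your parity ``frustration'' intuition is the right one, but it becomes rigorous only through that telescoping of exclusion terms, which your proposal does not supply.
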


\begin{remk}
If there are no vertices that have two or more incoming internal conflicts, i.e., $\mathcal{B}=\emptyset$, we can say that $\Delta_{\textrm{min}}$ is equal to infinity since there are no internal conflicts towards $\mathcal{B}$. In this case, the linear symmetric DoF upper bound of this network is equal to $\frac{1}{2}$, thus corresponding to the result of Theorem \ref{thm1}.
\end{remk}

\begin{remk}
Without reconfigurable antennas, it was shown in \cite[Corollary 8]{TIM1} that the symmetric DoF is upper-bounded by $\frac{\Delta}{2\Delta+1}$. Comparing this bound with the first bound in Theorem \ref{thm2}, we note the benefit of reconfigurable antenna switching, by increasing the upper bound from two perspectives. First, the minimum conflict distance is considered only among the internal conflicts towards each vertex $j \in \mathcal{B}$, while it was considered among all internal conflicts without mode switching. Second, the upper bound increases from $\frac{\Delta}{2\Delta+1}$ to $\frac{\Delta_{\textrm{min}}+1}{2(\Delta_{\textrm{min}}+1)+1}$ as if the minimum conflict distance increases from $\Delta$ to $\Delta_{\textrm{min}}+1$.
\end{remk}

\begin{remk}
If there exists an internal conflict cycle in the network topology, it is easy to verify that $\Delta_{\min}=1$, implying that the first upper bound in Theorem \ref{thm2} evaluates to $\frac{2}{5}$. In this case, if there exists at least one internal conflict cycle of \emph{odd} length, then the second bound will be strictly less than $\frac{2}{5}$, hence dominating the first bound.
\end{remk}

\begin{theo}[Linear symmetric DoF achievability for network topologies with at most two co-interferers]\label{thm3}
For partially connected $K$-user interference networks with reconfigurable antenna switching between at least two preset modes at the receivers, the linear symmetric DoF of $\frac{\Delta_{\textrm{min}}+1}{2\Delta_{\textrm{min}}+3}$ is achievable if the maximum number of co-interferers for each transmitter is bounded by two; i.e.,
\begin{eqnarray}
\underset{i\in [1:K]}{\emph{max}} \left|\widehat{\mathcal{T}}_{i}\right| \leq 2.
\end{eqnarray}
\end{theo}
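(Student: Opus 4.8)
The plan is to give an explicit constructive scheme, designing both the beamforming matrices $\mathbf{V}_i^m$ and the preset mode patterns $\mathbf{L}_j^m$ over a single block of $m=2\Delta_{\min}+3$ channel uses in which every transmitter sends exactly $n_i(m)=\Delta_{\min}+1$ streams, so that the rational value $\frac{\Delta_{\min}+1}{2\Delta_{\min}+3}$ is hit without an asymptotic limit. The first step is a structural reduction: since $|\widehat{\mathcal{T}}_i|\le 2$ for all $i$ and $\widehat{\mathcal{T}}_i$ is exactly the neighborhood of vertex $i$ in the alignment graph, every vertex has alignment-degree at most two, so each alignment set is either a simple path or a simple cycle. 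I would first record the consequence that an internal conflict cycle must then have length at most two: an external transmitter $\text{T}_x$ feeding a directed conflict cycle of length $L$ would be a common co-interferer of all $L$ cycle nodes, forcing $L\le 2$. Hence no \emph{odd} internal conflict cycle exists under the hypothesis, the second term of Theorem~\ref{thm2} is non-binding, and the target $\frac{\Delta_{\min}+1}{2\Delta_{\min}+3}$ is precisely the first (and only active) upper bound, so it suffices to match it.

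Because all interferers at any receiver $\text{R}_j$ lie in a single alignment set (any two of them share $\text{R}_j$ and are therefore alignment-adjacent), the interference-alignment design decouples across alignment sets, and I would construct the scheme one path or cycle at a time. Within a set I would lay the $\Delta_{\min}+1$ beamforming columns of each transmitter on a structured, nested support pattern that propagates alignment along consecutive alignment edges: for an alignment edge $\{i,k\}$ realized at a common victim $\text{R}_l$, the supports of $\mathbf{V}_i^m$ and $\mathbf{V}_k^m$ are chosen so that $\textrm{span}(\mathbf{H}_{l,i}^m\mathbf{V}_i^m)$ and $\textrm{span}(\mathbf{H}_{l,k}^m\mathbf{V}_k^m)$ overlap in as many dimensions as possible \emph{without} using the channel values (which are unknown to the transmitters), exploiting that the two links to $\text{R}_l$ share the identical mode pattern $\mathbf{L}_l^m$. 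Counting dimensions along a path whose worst bottleneck has conflict distance $\Delta_{\min}$, this alignment confines the interference at each receiver to at most $\Delta_{\min}+2$ of the $2\Delta_{\min}+3$ dimensions, leaving $\Delta_{\min}+1$ dimensions free for the desired signal.

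The decisive step, and the place where reconfigurable antennas enter, is the design of the mode patterns $\mathbf{L}_j^m$ so that the desired signal $\mathbf{H}_{j,j}^m\mathbf{V}_j^m$ is linearly independent of the aligned interference subspace $\mathcal{N}_j$, almost surely. In the static case the desired channel acts as a scalar, so a desired stream whose beamforming direction coincides with an interferer's cannot be separated, and this loss of one dimension at each bottleneck is exactly what caps the static scheme at $\frac{\Delta}{2\Delta+1}$. Here I would switch $\text{R}_j$ among at least two preset modes in a designed pattern so that $\mathbf{H}_{j,j}^m$ becomes a genuine diagonal matrix with generically distinct entries; multiplying $\mathbf{V}_j^m$ by this matrix tilts the otherwise-coincident desired direction out of $\mathcal{N}_j$, recovering the extra dimension and raising $\Delta_{\min}$ effectively to $\Delta_{\min}+1$, the mechanism already isolated in the remarks following Theorems~\ref{thm1} and~\ref{thm2}. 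The separability would be certified by writing the relevant $(2\Delta_{\min}+3)\times(2\Delta_{\min}+3)$ determinant as a nonzero polynomial in the mode-dependent channel gains $\{h_{j,i}(\cdot)\}$ and invoking the Schwartz--Zippel lemma, so that generic (hence almost-sure) channel realizations make it nonvanishing.

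I expect the main obstacle to be the bottleneck vertices in $\mathcal{B}$ together with global consistency around cycles. At a vertex with two incoming internal conflicts from neighbors at conflict distance $\Delta_{\min}$, the support pattern must simultaneously align both interferers and still leave the desired signal separable after mode switching; verifying that these competing support constraints do not collapse the desired dimension is the crux. For alignment sets that are cycles (including length-two internal conflict cycles), I must additionally check that the alignment conditions are mutually consistent all the way around the loop, a place where a naive construction can become rank-deficient, and that the mode patterns can still be chosen to keep every desired-signal determinant a nonzero polynomial. Once the single generic block of length $2\Delta_{\min}+3$ is shown to deliver $\Delta_{\min}+1$ interference-free dimensions to every user, the claimed symmetric LDoF of $\frac{\Delta_{\min}+1}{2\Delta_{\min}+3}$ follows immediately.
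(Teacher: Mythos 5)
Your proposal follows essentially the same route as the paper's proof: the same finite block of $m=2\Delta_{\min}+3$ channel uses with $n_i=\Delta_{\min}+1$ streams per user, the same structural reduction (the co-interferer set $\widehat{\mathcal{T}}_i$ is exactly the alignment-graph neighborhood of $i$, so every alignment set is a path or a cycle), confinement of each receiver's interference to $\Delta_{\min}+2$ dimensions via overlapping beamforming supports along alignment edges, and preset-mode switching to separate desired streams whose directions coincide with an interferer's. Your side observation that the two-co-interferer hypothesis forbids internal conflict cycles of length $L\geq 3$ (the external transmitter would acquire all $L$ cycle nodes as co-interferers), so that the odd-cycle term of Theorem \ref{thm2} is never active and matching the first term suffices for tightness, is correct and is a nice complement that the paper leaves implicit.

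However, the crux that you explicitly flag but do not close is precisely the substantive content of the paper's proof: reconciling the two competing demands placed on a single receiver's mode pattern $\mathbf{L}_k^m$. Alignment of two interferers' shared vector $\mathbf{v}$ at $\text{R}_k$ requires the preset mode to be \emph{constant} on the support of $\mathbf{v}$ (otherwise $\mathbf{H}_{k,i}^m\mathbf{v}$ and $\mathbf{H}_{k,j}^m\mathbf{v}$ are generically linearly independent, since the ratio $h_{k,i}(l_k(t))/h_{k,j}(l_k(t))$ varies with the mode), whereas separation of a desired vector that $\text{T}_k$ shares with one of its interferers requires the mode to \emph{switch} within that vector's support. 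The paper resolves this tension with explicit, checkable conditions on the construction: alignment-adjacent transmitters share $\Delta_{\min}$ sliding-window vectors not possessed by vertices at alignment distance $\Delta_{\min}$ or more from both; any vector shared across a conflict edge must have at least two nonzero entries (so there is room to switch modes inside its support); distinct alignment sets share no vectors; and any $2\Delta_{\min}+3$ vectors are linearly independent. Feasibility is then verified by a case analysis at each receiver (three interferers forming a triangle alignment set, two incoming internal conflicts, one, or none), showing in every case that the interference occupies at most $\Delta_{\min}+2$ dimensions while all $\Delta_{\min}+1$ desired dimensions survive, with the mode switching exactly inside the supports of desired-shared vectors and held fixed on the supports of interferer-shared ones. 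Without this support-versus-mode bookkeeping (or an equivalent argument), your claim that the constraints at bottleneck vertices and around cycles are simultaneously satisfiable remains unproven, so the proposal as written is a faithful outline of the correct scheme rather than a complete proof.
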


\begin{remk}
Theorem \ref{thm3} shows the tightness of the upper bound in Theorem \ref{thm2} on the linear symmetric DoF for the class of network topologies in which all the transmitters have at most two co-interferers, or equivalently there is no fork in the alignment graph, where a fork is a vertex with three or more alignment edges connected to it. If some of the transmitters in the network have more than two co-interferers, the upper bound in Theorem \ref{thm2} may not be always tight.
\end{remk} 


\section{Identifying the ``Best'' Network Topologies}
\label{sec_half}
We dedicate this section to identify the ``best'' network topologies (i.e., the ones that can achieve half linear symmetric DoF), hence proving Theorem 1. We also give an example of such ``best'' network topologies.

We first state the two lemmas that will be used in the proof.

\begin{lem}[{\cite[Lemma 2]{IC2}}]
If $\mathbf{H}_{1}\mathbf{v}_{1}\in \emph{span}(\mathbf{H}_{2}\mathbf{V}_{2})$, where $\mathbf{H}_{1}$ and $\mathbf{H}_{2}$ are two $m \times m$ full-rank diagonal matrices whose diagonal entries have the same changing pattern, $\mathbf{v}_{1}$ is an $m \times 1$ column vector, $\mathbf{V}_{2}$ is an $m \times n$ thin matrix, i.e., $n<m$, 
and the entries of $\mathbf{v}_{1}$ and $\mathbf{V}_{2}$ are generated independently of the values of the diagonal entries of 
$\mathbf{H}_{1}$ and $\mathbf{H}_{2}$, then $\mathbf{v}_{1} \in \emph{span}(\mathbf{V}_{2})$.
\end{lem}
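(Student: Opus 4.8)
The plan is to reduce the subspace-containment hypothesis to a single linear relation among fixed vectors with generic scalar coefficients, and then to invoke a genericity argument. First I would note that $\mathbf{H}_{1}\mathbf{v}_{1}\in\textrm{span}(\mathbf{H}_{2}\mathbf{V}_{2})$ means $\mathbf{H}_{1}\mathbf{v}_{1}=\mathbf{H}_{2}\mathbf{V}_{2}\mathbf{c}$ for some coefficient vector $\mathbf{c}$, and left-multiply by $\mathbf{H}_{2}^{-1}$, which exists since $\mathbf{H}_{2}$ is full-rank diagonal. This yields $\mathbf{D}\mathbf{v}_{1}\in\textrm{span}(\mathbf{V}_{2})$, where $\mathbf{D}=\mathbf{H}_{2}^{-1}\mathbf{H}_{1}$ is diagonal. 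The crucial consequence of the two matrices having the \emph{same changing pattern} is that the diagonal entries of $\mathbf{D}$ repeat exactly whenever the preset mode repeats; hence $\mathbf{D}$ takes at most $p$ distinct diagonal values $\delta_{1},\ldots,\delta_{p}$, one per distinct preset mode, each equal to the ratio of the two mode-dependent channel gains.

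Next I would decompose according to the preset modes. Let $\mathbf{E}_{k}$ be the diagonal $0$--$1$ matrix selecting the time indices that use mode $k$, so that $\sum_{k=1}^{p}\mathbf{E}_{k}=\mathbf{I}$ and $\mathbf{D}=\sum_{k=1}^{p}\delta_{k}\mathbf{E}_{k}$. Writing $\mathbf{u}_{k}=\mathbf{E}_{k}\mathbf{v}_{1}$, one has $\mathbf{v}_{1}=\sum_{k}\mathbf{u}_{k}$ and $\mathbf{D}\mathbf{v}_{1}=\sum_{k}\delta_{k}\mathbf{u}_{k}$. Let $\mathbf{P}$ denote the orthogonal projection onto the orthogonal complement of $\textrm{span}(\mathbf{V}_{2})$, which is nontrivial because $n<m$. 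Projecting the relation $\mathbf{D}\mathbf{v}_{1}\in\textrm{span}(\mathbf{V}_{2})$ annihilates the right-hand side and eliminates the channel-dependent vector $\mathbf{c}$ entirely, leaving the single vector relation $\sum_{k=1}^{p}\delta_{k}\,\mathbf{P}\mathbf{u}_{k}=\mathbf{0}$.

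Then I would apply the genericity argument. The vectors $\mathbf{P}\mathbf{u}_{k}$ are fixed, being built only from $\mathbf{v}_{1}$ and $\mathbf{V}_{2}$, which by assumption are generated independently of the channel gains; the scalars $\delta_{k}$ are ratios of gains of distinct, independently drawn preset modes, hence almost surely satisfy no nonzero linear relation with coefficients drawn from the field generated by the entries of the $\mathbf{P}\mathbf{u}_{k}$. Reading the relation coordinate by coordinate, each scalar equation $\sum_{k}(\mathbf{P}\mathbf{u}_{k})_{i}\,\delta_{k}=0$ therefore forces all of its coefficients to vanish, so $\mathbf{P}\mathbf{u}_{k}=\mathbf{0}$ for every $k$. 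Equivalently, each $\mathbf{u}_{k}$ lies in $\textrm{span}(\mathbf{V}_{2})$, and summing gives $\mathbf{v}_{1}=\sum_{k}\mathbf{u}_{k}\in\textrm{span}(\mathbf{V}_{2})$, as claimed.

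The main obstacle is making the genericity step rigorous: one must argue carefully that the finitely many diagonal ratios $\delta_{k}$, together with the fixed entries of the beamforming vectors, fail almost surely to satisfy any of the finitely many linear equations obtained coordinatewise. This is precisely where the independence hypothesis between the channel realizations and the entries of $\mathbf{v}_{1},\mathbf{V}_{2}$ is indispensable, and where the conclusion should be phrased as holding ``almost surely'' over the continuous channel distribution. The degenerate case $p=1$ (a single mode used throughout) is immediate, since then $\mathbf{D}$ is a scalar multiple of the identity and $\mathbf{v}_{1}\in\textrm{span}(\mathbf{V}_{2})$ follows directly.
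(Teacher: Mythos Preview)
The paper does not supply its own proof of this lemma; it is quoted verbatim from \cite[Lemma~2]{IC2} and used as a black box, so there is no in-paper argument to compare yours against. Your reduction via $\mathbf{D}=\mathbf{H}_{2}^{-1}\mathbf{H}_{1}$ and the mode-wise decomposition $\mathbf{D}=\sum_{k}\delta_{k}\mathbf{E}_{k}$ correctly exploits the ``same changing pattern'' hypothesis, and projecting onto $(\textrm{span}\,\mathbf{V}_{2})^{\perp}$ to obtain $\sum_{k}\delta_{k}\mathbf{P}\mathbf{u}_{k}=\mathbf{0}$ is clean.

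The only point that deserves a sharper formulation is the one you flag yourself. The hypothesis $\mathbf{H}_{1}\mathbf{v}_{1}\in\textrm{span}(\mathbf{H}_{2}\mathbf{V}_{2})$ is not being asserted for one fixed channel realization and then combined with genericity of that same realization; rather, in the context where this lemma is applied, the beamforming vectors are chosen \emph{before} the channel is drawn, so the containment must hold for almost every realization of $(\delta_{1},\ldots,\delta_{p})$. Phrased that way, your relation $\sum_{k}\delta_{k}\mathbf{P}\mathbf{u}_{k}=\mathbf{0}$ is a polynomial identity in the generic variables $\delta_{k}$ with coefficients fixed by $\mathbf{v}_{1},\mathbf{V}_{2}$, and vanishing identically forces each coefficient $\mathbf{P}\mathbf{u}_{k}$ to be zero---no algebraic-independence subtlety needed. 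Stating it this way removes the slight circularity of conditioning on a channel-dependent event and then invoking genericity of the channel.
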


\begin{lem}[{\cite[Lemma 3]{Lash}}]
For two matrices $\mathbf{A}$ and $\mathbf{B}$ with the same row size,
\begin{eqnarray}
\emph{dim}\left(\emph{Proj}_{\mathcal{A}^{c}} \mathcal{B}\right) = \emph{rank} ([\mathbf{A} \hspace{1mm} \mathbf{B}]) - \emph{rank}(\mathbf{A}),
\end{eqnarray}
where $\mathcal{A}$ and $\mathcal{B}$ denotes $\emph{span}(\mathbf{A})$ and $\emph{span}(\mathbf{B})$, respectively.
\end{lem}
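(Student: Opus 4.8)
The plan is to treat $\textrm{Proj}_{\mathcal{A}^{c}}$ as the orthogonal projection $P$ onto the orthogonal complement $\mathcal{A}^{c}=\mathcal{A}^{\perp}$, and to reduce the claimed identity to the elementary dimension formula for a direct sum. First I would translate everything from ranks into dimensions of column spaces: since $\mathbf{A}$ and $\mathbf{B}$ have the same row size they share an ambient space, so the column space of the concatenation satisfies $\textrm{span}([\mathbf{A}\ \mathbf{B}])=\mathcal{A}+\mathcal{B}$, giving $\textrm{rank}([\mathbf{A}\ \mathbf{B}])=\textrm{dim}(\mathcal{A}+\mathcal{B})$ and $\textrm{rank}(\mathbf{A})=\textrm{dim}(\mathcal{A})$. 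With this translation the statement becomes purely geometric: it suffices to prove $\textrm{dim}(P\mathcal{B})=\textrm{dim}(\mathcal{A}+\mathcal{B})-\textrm{dim}(\mathcal{A})$, where $P\mathcal{B}=\textrm{Proj}_{\mathcal{A}^{c}}\mathcal{B}$.

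The key step is to rewrite the sum space using the orthogonal decomposition induced by $P$. Every vector $\mathbf{b}\in\mathcal{B}$ splits as $\mathbf{b}=(\mathbf{b}-P\mathbf{b})+P\mathbf{b}$ with $\mathbf{b}-P\mathbf{b}\in\mathcal{A}$ and $P\mathbf{b}\in\mathcal{A}^{\perp}$; absorbing the first component into $\mathcal{A}$ yields $\mathcal{A}+\mathcal{B}=\mathcal{A}+P\mathcal{B}$. Moreover this sum is \emph{direct}, because $P\mathcal{B}\subseteq\mathcal{A}^{\perp}$ forces $\mathcal{A}\cap P\mathcal{B}\subseteq\mathcal{A}\cap\mathcal{A}^{\perp}=\{\mathbf{0}\}$. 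Applying the dimension formula for a direct sum then gives $\textrm{dim}(\mathcal{A}+\mathcal{B})=\textrm{dim}(\mathcal{A})+\textrm{dim}(P\mathcal{B})$, i.e. $\textrm{dim}(P\mathcal{B})=\textrm{dim}(\mathcal{A}+\mathcal{B})-\textrm{dim}(\mathcal{A})$. Combining this with the rank translations from the first step produces exactly the asserted identity $\textrm{dim}(\textrm{Proj}_{\mathcal{A}^{c}}\mathcal{B})=\textrm{rank}([\mathbf{A}\ \mathbf{B}])-\textrm{rank}(\mathbf{A})$.

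I do not expect a genuine obstacle, as this is a standard linear-algebra fact; the only point requiring a moment of care is verifying that $\mathcal{A}+\mathcal{B}=\mathcal{A}+P\mathcal{B}$ is a direct sum, which is precisely where orthogonality of the projection enters. An alternative, equally short route would argue directly at the level of ranks, writing $P=I-\mathbf{A}(\mathbf{A}^{H}\mathbf{A})^{+}\mathbf{A}^{H}$ and invoking $\textrm{rank}(P\mathbf{B})=\textrm{rank}([\mathbf{A}\ \mathbf{B}])-\textrm{rank}(\mathbf{A})$; however, the geometric decomposition above is cleaner and avoids pseudo-inverses, so I would present that version.
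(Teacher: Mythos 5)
Your proposal is correct, but there is nothing in the paper to compare it against: the paper does not prove this lemma at all, it imports it verbatim from the cited reference \cite{Lash} (where it appears as Lemma 3) and uses it as a black box. Your argument is therefore a valid, self-contained substitute for that citation. The two translation steps are right: since $\mathbf{A}$ and $\mathbf{B}$ share an ambient column space, $\mathrm{rank}([\mathbf{A}\ \mathbf{B}])=\dim(\mathcal{A}+\mathcal{B})$ and $\mathrm{rank}(\mathbf{A})=\dim(\mathcal{A})$. The geometric core is also sound, with one point you state a bit tersely: the equality $\mathcal{A}+\mathcal{B}=\mathcal{A}+P\mathcal{B}$ needs both inclusions, and both follow from the same decomposition --- for one direction write $\mathbf{a}+\mathbf{b}=\bigl(\mathbf{a}+(\mathbf{b}-P\mathbf{b})\bigr)+P\mathbf{b}$ with $\mathbf{b}-P\mathbf{b}\in\mathcal{A}$, and for the other write $P\mathbf{b}=\mathbf{b}-(\mathbf{b}-P\mathbf{b})\in\mathcal{B}+\mathcal{A}$. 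Orthogonality then gives $\mathcal{A}\cap P\mathcal{B}\subseteq\mathcal{A}\cap\mathcal{A}^{\perp}=\{\mathbf{0}\}$, so the sum is direct and $\dim(P\mathcal{B})=\dim(\mathcal{A}+\mathcal{B})-\dim(\mathcal{A})$, which is exactly the claimed identity; this also implicitly confirms that $\mathrm{Proj}_{\mathcal{A}^{c}}\mathcal{B}=P\mathcal{B}$ is a genuine subspace, matching the paper's definition of $\mathrm{Proj}_{\mathcal{A}^{c}}$ as projection onto the orthogonal complement. What your approach buys over the paper's citation is transparency: the reader sees that the lemma is an elementary direct-sum dimension count rather than a deep result, and the proof works over any inner-product space containing the columns, which is all the paper ever needs.
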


\begin{proof}[Proof of Theorem 1]
We first prove the converse. To prove the converse of Theorem 1, we show that half linear symmetric DoF cannot be achieved if some of vertices have two or more incoming internal conflicts. Suppose that each transmitter sends $\frac{m}{2}$ symbols over $m$ channel uses and a set of vertices $i$, $j$, and $k$ on the graph of the network are included in alignment set $\mathcal{S}$. If $\text{T}_{i}$ and $\text{T}_{j}$ are connected to $\text{R}_{k}$ and vertices $i$ and $j$ are connected with an alignment edge, the dimension of the desired signal subspace at $\text{R}_{k}$ not interfered by the interfering signals is given by
\begin{eqnarray*}
\textrm{dim}\left(\textrm{Proj}_{\mathcal{N}^{c}_{k}}\textrm{span}(\mathbf{H}^{m}_{k,k}\mathbf{V}^{m}_{k})\right) &=& \textrm{dim}(\mathbf{H}^{m}_{k,k}\mathbf{V}^{m}_{k}, \mathbf{H}^{m}_{k,i}\mathbf{V}^{m}_{i}, \mathbf{H}^{m}_{k,j}\mathbf{V}^{m}_{j}) - \textrm{dim}(\mathbf{H}^{m}_{k,i}\mathbf{V}^{m}_{i},\mathbf{H}^{m}_{k,j}\mathbf{V}^{m}_{j}) \\ 
&\leq& m - \textrm{dim}(\mathbf{H}^{m}_{k,i}\mathbf{V}^{m}_{i},\mathbf{H}^{m}_{k,j}\mathbf{V}^{m}_{j}).
\end{eqnarray*}

Due to (\ref{D_tuple}), the interfering signals should be aligned at $\text{R}_{k}$ to achieve half linear DoF as
\begin{eqnarray}
\label{eq:Rk}
\textrm{at $\text{R}_{k}$: } \textrm{dim}(\mathbf{H}^{m}_{k,i}\mathbf{V}^{m}_{i}, \mathbf{H}^{m}_{k,j}\mathbf{V}^{m}_{j}) \leq \frac{m}{2}.
\end{eqnarray}
Since the columns of each beamforming matrix are linearly independent, all $\mathbf{v}^{m}_{i,n}$ for $n \in [1:\frac{m}{2}]$ should be aligned at $\text{R}_{k}$ with $\mathbf{V}^{m}_{j}$ as
\begin{eqnarray}
\label{eq:viVj}
\textrm{at $\text{R}_{k}$: } \mathbf{H}^{m}_{k,i}\mathbf{v}^{m}_{i,n} \in \textrm{span}(\mathbf{H}^{m}_{k,j}\mathbf{V}^{m}_{j}).
\end{eqnarray}
Recall that the transmitters have no knowledge about channel realizations and $\mathbf{H}^{m}_{k,i}$ and $\mathbf{H}^{m}_{k,j}$ have the identical changing pattern of their diagonal entries. By Lemma 1,
\begin{eqnarray}
\label{eq:Vij1}
\mathbf{v}^{m}_{i,n} \in \textrm{span}(\mathbf{V}^{m}_{j}).
\end{eqnarray}
Since all the columns of $\mathbf{V}^{m}_{i}$ are included in $\mathbf{V}^{m}_{j}$,
\begin{eqnarray}
\textrm{dim}(\mathbf{V}^{m}_{i},\mathbf{V}^{m}_{j}) \leq \frac{m}{2},
\end{eqnarray}
and since each beamforming matrix has $m/2$ linearly independent columns, the above inequality implies that
\begin{eqnarray}
\textrm{span}(\mathbf{V}^{m}_{i})=\textrm{span}(\mathbf{V}^{m}_{j}).
\end{eqnarray}
Note that vertex $k$ is also in the same alignment set. Hence, following similar arguments as the above, we will have
\begin{eqnarray}
\label{eq:Vall2}
\textrm{span}(\mathbf{V}^{m}_{i})=\textrm{span}(\mathbf{V}^{m}_{j})=\textrm{span}(\mathbf{V}^{m}_{k}).
\end{eqnarray}
Meanwhile, equation (\ref{eq:viVj}) can be paraphrased by multiplying $\mathbf{H}^{m}_{k,k}$ and its inverse as
\begin{eqnarray}
\mathbf{H}^{m}_{k,i}\mathbf{v}^{m}_{i,n} \in \textrm{span}(\mathbf{H}^{m}_{k,k}(\mathbf{H}^{m}_{k,k})^{-1}\mathbf{H}^{m}_{k,j}\mathbf{V}^{m}_{j}).
\end{eqnarray}
By Lemma 1, it leads to
\begin{eqnarray}
\label{eq:Vij}
\mathbf{v}^{m}_{i,n} \in \textrm{span}( (\mathbf{H}^{m}_{k,k})^{-1}\mathbf{H}^{m}_{k,j}\mathbf{V}^{m}_{j}).
\end{eqnarray}
From (\ref{eq:Vall2}) and (\ref{eq:Vij}), $\textrm{span}(\mathbf{V}^{m}_{k})$ and $\textrm{span}( (\mathbf{H}^{m}_{k,k})^{-1}\mathbf{H}^{m}_{k,j}\mathbf{V}^{m}_{j})$ have non-zero intersection since $\mathbf{v}^{m}_{i,n}$ is included in both of the two subspaces. Thus, it can be expressed as
\begin{eqnarray*}
\textrm{dim}(\mathbf{H}^{m}_{k,k}\mathbf{V}^{m}_{k} \cap \mathbf{H}^{m}_{k,j}\mathbf{V}^{m}_{j}) > 0.
\end{eqnarray*}

At $\text{R}_{k}$, the desired signal subspace is contaminated by the interfering signal subspace from $\text{T}_{j}$. Thus, $\text{R}_{k}$ cannot decode all the desired $m/2$ symbols over $m$ channel uses, implying that half linear symmetric DoF cannot be achieved. This completes the proof of the converse.

We now prove the achievability. To prove the achievability of Theorem 1, we show that half linear symmetric DoF can be achieved if all of the vertices have less than two incoming internal conflicts. Suppose that each transmitter sends one data symbol in two time slots and the set of transmitters whose corresponding vertices are included in the same alignment set have the same $2 \times 1$ beamforming vector. We specify the beamforming vectors for each alignment set such that any two of them are linearly independent. Consider a receiver $\text{R}_{j}$ in the network. If $j$ is included in alignment set $\mathcal{S}_{1}$, $\text{R}_{j}$ receives the interfering signals from the transmitters included in another alignment set ($\mathcal{S}_{2}$) or from a single transmitter in alignment set $\mathcal{S}_{1}$. For the former case, interfering signals can be aligned to a 1-dimensional signal subspace if $\text{R}_{j}$ does not change the preset mode over two channel uses; i.e., $\mathbf{L}^{2}_{j}=[1 \,\, 1]$, since all its interferers have the same beamforming vector.
In this case, the received signal at $\text{R}_{j}$ over two channel uses is given by
\begin{eqnarray*}
\mathbf{y}^{2}_{j} =
\left[\begin{array}{cc} h_{j,j}(l_{j}(1)) & 0 \\ 0 & h_{j,j}(l_{j}(1))\end{array}\right] \mathbf{v}^{2}_{\mathcal{S}_{1}}s_{j,1}
+ \underbrace{\sum\limits_{i \in \mathcal{S}_2}\left[\begin{array}{cc} h_{j,i}(l_{j}(1)) & 0 \\ 0 & h_{j,i}(l_{j}(1))\end{array}\right] \mathbf{v}^{2}_{\mathcal{S}_{2}}s_{i,1}}_{\textrm{rank}=1} +\mathbf{z}^{2}_{j},
\end{eqnarray*}
where $\mathbf{v}^{2}_{\mathcal{S}_{1}}$ and $\mathbf{v}^{2}_{\mathcal{S}_{2}}$ are $2 \times 1$ beamforming vectors for alignment sets $\mathcal{S}_{1}$ and $\mathcal{S}_{2}$, respectively.
For the latter case, when $\text{R}_{j}$ is only interfered by a transmitter which is included in the same alignment set, it can have two independent equations by changing the preset mode from $1$ to $2$; i.e., $\mathbf{L}^{2}_{j}=[1 \,\, 2]$, although $\text{T}_{j}$ and the interferer have the same beamforming vector. It can be expressed as
\begin{eqnarray*}
\mathbf{y}^{2}_{j} =
\left[\begin{array}{cc} h_{j,j}(l_{j}(1)) & 0 \\ 0 & h_{j,j}(l_{j}(2))\end{array}\right] \mathbf{v}^{2}_{\mathcal{S}_{1}}s_{j,1}
+ \left[\begin{array}{cc} h_{j,i}(l_{j}(1)) & 0 \\ 0 & h_{j,i}(l_{j}(2))\end{array}\right] \mathbf{v}^{2}_{\mathcal{S}_{1}}s_{i,1} +\mathbf{z}^{2}_{j}.
\end{eqnarray*}

Thus, all users can achieve $1/2$ linear DoF and the proof is complete.
\end{proof}

\begin{ex}

\begin{figure}[t]
    \centerline{\includegraphics[width=8.0cm]{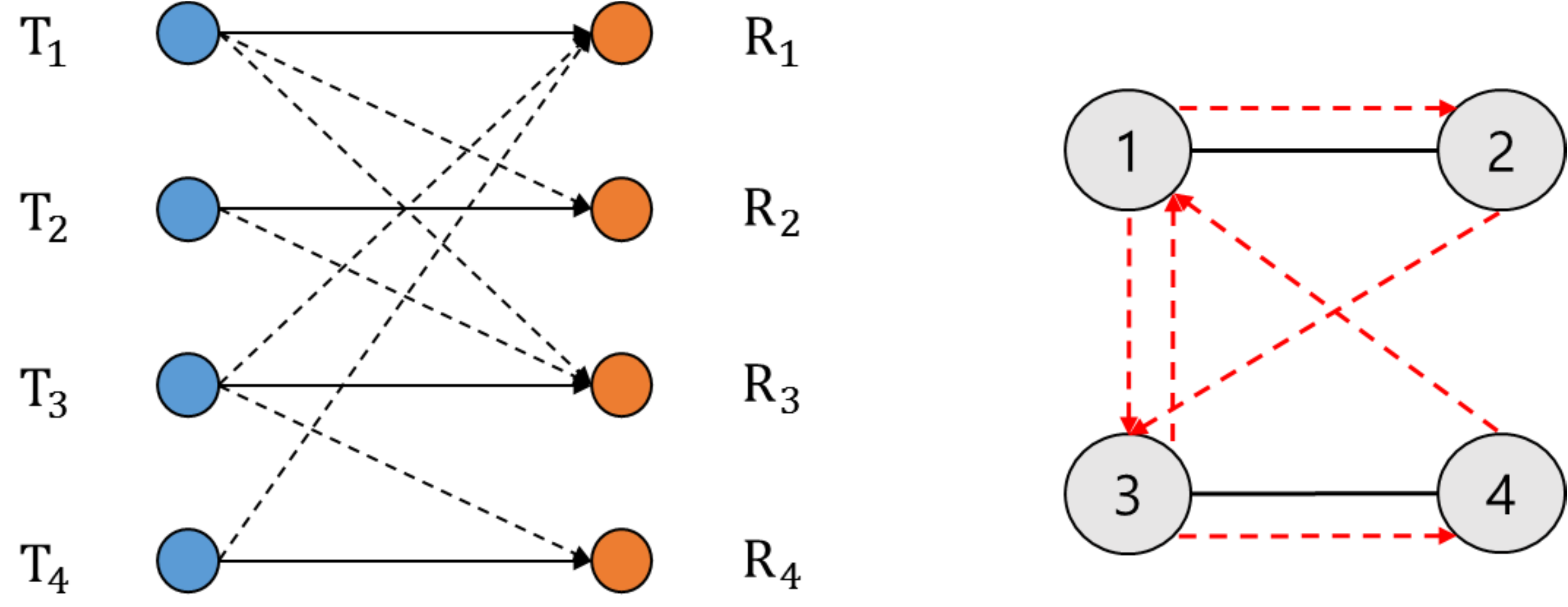}}
    \caption{A $4$-user interference network and its alignment and conflict graphs.}
    \label{DoF12}
    \vspace{-4mm}
\end{figure}

Consider a 4-user interference network illustrated in Fig. \ref{DoF12}. According to its corresponding alignment and conflict graph, the alignment sets are determined as
\begin{eqnarray*}
\mathcal{S}_{1} = \{1, 2\}, \,\,\,\,\, \mathcal{S}_{2} = \{3, 4\}.
\end{eqnarray*}

Without preset mode switching, this interference network cannot achieve half linear symmetric DoF since there exist internal conflicts (from vertex 1 to 2 and from vertex 3 to 4) on the graph. However, it can achieve half linear symmetric DoF with the aid of preset mode switching because no vertex has two or more incoming internal conflicts. We now show an achievable scheme for this interference network. Each user sends one data symbol to its corresponding receiver during two time slots. We determine the beamforming vectors for each alignment set in order to be linearly independent as
\begin{eqnarray*}
\mathbf{v}^{2}_{1}=\mathbf{v}^{2}_{2}=[1 \hspace{3mm} 1]^{T},
\hspace{5mm}\mathbf{v}^{2}_{3}=\mathbf{v}^{2}_{4}=[1 \hspace{3mm} -1]^{T},
\end{eqnarray*}
and the preset mode patterns are given by
\begin{eqnarray*}
\mathbf{L}^{2}_{1}=\mathbf{L}^{2}_{3}=[1 \hspace{3mm} 1],
\hspace{5mm}\mathbf{L}^{2}_{2}=\mathbf{L}^{2}_{4}=[1 \hspace{3mm} 2].
\end{eqnarray*}

Receivers $\text{R}_{1}$ and $\text{R}_{3}$ which are interfered by the transmitters in the other alignment set do not change their preset mode to align interfering signals into a 1-dimensional signal subspace, while $\text{R}_{2}$ and $\text{R}_{4}$ which are interfered by the transmitter in the same alignment set change their preset modes to have independent desired signal and interference subspaces. Hence, receivers $\text{R}_{1}$ and $\text{R}_{3}$ can decode their desired symbols by adding the received signals in time slots 1 and 2 and by subtracting the received signal in time slot 2 from time slot 1, respectively. Receivers $\text{R}_{2}$ and $\text{R}_{4}$ can also decode their desired symbols since they each have two independent equations during two time slots.

\end{ex}

\section{Upper Bound on the Linear Symmetric DoF for General Topologies}
\label{sec_upper}
In this section, we prove Theorem \ref{thm2} by providing two upper bounds on the linear symmetric DoF for arbitrary network topologies in terms of the minimum conflict distance for internal conflicts towards the vertices that have two or more internal conflicts and the length of the shortest internal conflict cycle of odd length. We prove the theorem through the following two propositions.

\begin{prop}\label{prop1}
For partially connected $K$-user interference networks with reconfigurable antenna switching among any number of preset modes at the receivers, the linear symmetric DoF is upper-bounded by
\begin{eqnarray}
\emph{LDoF}_{\emph{sym}} \leq \frac{\Delta_{\emph{min}}+1}{2\Delta_{\emph{min}}+3}.
\end{eqnarray}
\end{prop}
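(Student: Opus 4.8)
The plan is to run a quantitative, robust version of the converse of Theorem 1, turning the exact alignment identities used there into dimension inequalities valid for an arbitrary target $d=\lim_{m}d(m)/m$. First I would localize the argument to a bottleneck. By definition of $\Delta_{\min}$ there is a vertex $j\in\mathcal{B}$ carrying at least two incoming internal conflicts, one of which, say from a vertex $a$, realizes the minimum conflict distance; this supplies an alignment path $a=v_0,v_1,\ldots,v_{\Delta_{\min}}=j$ together with a second interferer $b$ with $b\to j$ and $b$ in the same alignment set. I would then assume, for contradiction, that the symmetric LDoF $d$ is achievable, so that every receiver appearing in this structure decodes its $d(m)$ streams.

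Next I would extract two basic facts at each relevant receiver from Lemma 2: decodability forces the interference to occupy at most $m-d(m)$ dimensions, i.e. $\dim(\mathcal{N}_r)\le m-d(m)$, and the desired image $\mathbf{H}^{m}_{r,r}\mathbf{V}^{m}_{r}$ must meet $\mathcal{N}_r$ trivially. Applying the first fact to the common victim $R_{k_t}$ of each alignment edge $(v_t,v_{t+1})$ gives $\dim(\mathbf{H}^{m}_{k_t,v_t}\mathbf{V}^{m}_{v_t},\mathbf{H}^{m}_{k_t,v_{t+1}}\mathbf{V}^{m}_{v_{t+1}})\le m-d(m)$, hence the two interferers share at least $3d(m)-m$ dimensions in the channel domain; since both channels point at the same receiver and therefore carry the same changing pattern, Lemma 1 promotes this to $\dim(\mathbf{V}^{m}_{v_t}\cap\mathbf{V}^{m}_{v_{t+1}})\ge 3d(m)-m$ in the beamformer domain. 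Because $\mathcal{I}_j$ contains both $a$ and $b$, the two conflicts also create an alignment edge between $a$ and $b$ (with victim $R_j$ itself), so the same argument yields $\dim(\mathbf{V}^{m}_{a}\cap\mathbf{V}^{m}_{b})\ge 3d(m)-m$. Chaining these pairwise overlaps along the path and folding in this last edge produces a common beamforming subspace $\mathbf{U}$ contained in $\mathbf{V}^{m}_{a}\cap\mathbf{V}^{m}_{b}\cap\mathbf{V}^{m}_{j}$ whose dimension I can lower bound linearly in $d(m)$, $m$, and $\Delta_{\min}$.

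The decisive step is the contamination count at $R_j$. There $T_a$, $T_b$, $T_j$ act on $\mathbf{U}$ through the diagonal matrices $\mathbf{H}^{m}_{j,a},\mathbf{H}^{m}_{j,b},\mathbf{H}^{m}_{j,j}$, which all carry the single changing pattern of $R_j$. The reconfigurable-antenna structure is precisely what separates the one- and two-interferer cases: with a single aligned interferer the receiver can switch modes to keep the desired image disjoint from the interference (the mechanism behind Theorem 1), whereas two aligned interferers over the forced subspace $\mathbf{U}$ necessarily push part of $\mathbf{H}^{m}_{j,j}\mathbf{U}$ into $\mathrm{span}(\mathbf{H}^{m}_{j,a}\mathbf{U},\mathbf{H}^{m}_{j,b}\mathbf{U})\subseteq\mathcal{N}_j$, contradicting the trivial-intersection requirement unless $d(m)$ is small. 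Quantifying this overlap (again via Lemma 1 to control the mutual position of the three same-pattern images) and combining it with the lower bound on $\dim\mathbf{U}$ yields, as $m\to\infty$, the inequality $(2\Delta_{\min}+3)\,d\le \Delta_{\min}+1$, which is the claim. Heuristically, absorbing one internal conflict by mode switching costs one step of the chain, so the constant-channel bound $\frac{\Delta}{2\Delta+1}$ is shifted to $\frac{\Delta_{\min}+1}{2(\Delta_{\min}+1)+1}$.

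The main obstacle I anticipate is this last count. The subspaces $\mathbf{V}^{m}_{a},\mathbf{V}^{m}_{b},\mathbf{V}^{m}_{j}$ are \emph{not} in general position, since alignment at every receiver of the bottleneck forces correlated overlaps; consequently the contamination at $R_j$ cannot be read off from a naive generic-position estimate and must be tracked together with the overlaps inherited from the whole path. Keeping the three same-changing-pattern images mutually consistent across all of the involved receivers, and assembling the resulting chain of dimension inequalities into the single clean bound without sacrificing the $+1$ improvement, is the delicate part of the proof.
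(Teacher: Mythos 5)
Your overall architecture mirrors the paper's proof almost exactly: localize to a vertex $j\in\mathcal{B}$ with two incoming internal conflicts, chain pairwise beamformer overlaps along the alignment path using Lemma~1 plus submodularity of subspace dimensions, and close with a dimension count inside $\mathbf{V}^{m}_{a}$; your chaining step is fine (the induction, the paper's Lemma~3, indeed needs only the weak pairwise bounds $\dim(\mathbf{V}^{m}_{v_t}\cap\mathbf{V}^{m}_{v_{t+1}})\geq (3\lambda-1)m$), and your final arithmetic would reproduce the claimed bound. But your decisive step is false as stated. You claim that a nonzero subspace $\mathbf{U}\subseteq \mathrm{span}(\mathbf{V}^{m}_{a})\cap\mathrm{span}(\mathbf{V}^{m}_{b})\cap\mathrm{span}(\mathbf{V}^{m}_{j})$ necessarily pushes part of $\mathbf{H}^{m}_{j,j}\mathbf{U}$ into $\mathrm{span}(\mathbf{H}^{m}_{j,a}\mathbf{U},\mathbf{H}^{m}_{j,b}\mathbf{U})\subseteq\mathcal{N}_j$. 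It does not: if the preset-mode pattern of $\text{R}_j$ takes three or more distinct values on the support of a common vector $\mathbf{u}$, the three images $\mathbf{H}^{m}_{j,j}\mathbf{u}$, $\mathbf{H}^{m}_{j,a}\mathbf{u}$, $\mathbf{H}^{m}_{j,b}\mathbf{u}$ are generically linearly independent, so no contamination occurs. (Concretely: $m=3$, all of $\text{T}_a,\text{T}_b,\text{T}_j$ using the single beamformer $[1\;1\;1]^{T}$, and $\text{R}_j$ cycling through modes $1,2,3$, is perfectly decodable with $\lambda=1/3$ even though the triple intersection is everything.) Since the proposition explicitly allows any number of preset modes, this case cannot be excluded. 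The word ``aligned'' in your claim does illegitimate work: the constraint $\dim(\mathcal{N}_j)\le(1-\lambda)m$ is a global dimension count, and nothing forces the scheme to align the two interferers over your particular subspace $\mathbf{U}$ --- it can misalign there (paying three modes locally) and align elsewhere. So ``triple intersection nonempty $\Rightarrow$ contradiction'' is not a true implication, and tracking the local misalignment cost instead gives a strictly weaker constant than $\frac{\Delta_{\min}+1}{2\Delta_{\min}+3}$.

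The paper's repair --- and what your proof is missing --- is that the triple intersection is never claimed to be zero; it is made to cancel. At each victim $\text{R}_s$ of two alignment-edge-connected interferers $p,q$, the paper proves the refined inequality $\dim(\mathbf{V}^{m}_{p}\cap\mathbf{V}^{m}_{q})-\dim(\mathbf{V}^{m}_{p}\cap\mathbf{V}^{m}_{q}\cap\mathbf{V}^{m}_{s})\ge(3\lambda-1)m$. The mechanism: alignment forces a \emph{channel-domain} intersection $\mathrm{span}(\mathbf{H}^{m}_{s,p}\mathbf{V}^{m}_{p})\cap\mathrm{span}(\mathbf{H}^{m}_{s,q}\mathbf{V}^{m}_{q})$ of dimension at least $(3\lambda-1)m$; by Lemma~1 its pullback lies in $\mathrm{span}(\mathbf{V}^{m}_{p})\cap\mathrm{span}(\mathbf{V}^{m}_{q})$; and this pullback must meet $\mathrm{span}(\mathbf{V}^{m}_{s})$ trivially, because any vector $\mathbf{v}_*$ in both satisfies $\mathbf{H}^{m}_{s,*}\mathbf{v}_*\in\mathrm{span}\bigl(\mathbf{H}^{m}_{s,s}(\mathbf{H}^{m}_{s,s})^{-1}\mathbf{H}^{m}_{s,q}\mathbf{V}^{m}_{q}\bigr)$, whence Lemma~1 gives $\mathbf{H}^{m}_{s,s}\mathbf{v}_*\in\mathrm{span}(\mathbf{H}^{m}_{s,q}\mathbf{V}^{m}_{q})$ --- genuine contamination of the desired subspace. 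It is the forced channel-domain alignment, not the beamformer-domain triple intersection, that must avoid $\mathbf{V}^{m}_{s}$. With this inequality, the endgame is the count $\dim(\mathbf{V}^{m}_{a}\cap\mathbf{V}^{m}_{j})+\dim(\mathbf{V}^{m}_{a}\cap\mathbf{V}^{m}_{b})-\dim(\mathbf{V}^{m}_{a}\cap\mathbf{V}^{m}_{b}\cap\mathbf{V}^{m}_{j})\le\mathrm{rank}(\mathbf{V}^{m}_{a})=\lambda m$: substituting the chain bound $((2\Delta_{\min}+1)\lambda-\Delta_{\min})m$ for the first term and the refined bound $(3\lambda-1)m$ for the last two terms combined yields $(2\Delta_{\min}+3)\lambda\le\Delta_{\min}+1$ directly, with no contradiction argument about $\mathbf{U}$ needed.
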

\begin{proof}
Suppose that each user sends $\lambda m$ data symbols over $m$ channel uses; i.e., $n_i = \lambda m, \forall i \in [1:K]$. For two arbitrary transmitters $\text{T}_{p}$ and $\text{T}_{q}$, if $\text{T}_{p}$ and $\text{T}_{q}$ are both interfering at $\text{R}_{s}$ and vertices $p$ and $q$ are connected with an alignment edge, the dimension of the desired signal subspace at $\text{R}_{s}$ not interfered by the interfering signals is given by
\begin{eqnarray*}
\textrm{dim}\left(\textrm{Proj}_{\mathcal{N}^{c}_{s}}\textrm{span}(\mathbf{H}^{m}_{s,s}\mathbf{V}^{m}_{s})\right) &=& \textrm{dim}(\mathbf{H}^{m}_{s,s}\mathbf{V}^{m}_{s}, \mathbf{H}^{m}_{s,p}\mathbf{V}^{m}_{p}, \mathbf{H}^{m}_{s,q}\mathbf{V}^{m}_{q}) - \textrm{dim}(\mathbf{H}^{m}_{s,p}\mathbf{V}^{m}_{p},\mathbf{H}^{m}_{s,q}\mathbf{V}^{m}_{q}) \\ 
&\leq& m - \textrm{dim}(\mathbf{H}^{m}_{s,p}\mathbf{V}^{m}_{p},\mathbf{H}^{m}_{s,q}\mathbf{V}^{m}_{q}).
\end{eqnarray*}

The interfering signals should be aligned at $\text{R}_{s}$ to set aside $\lambda m$-dimensional desired signal subspace as
\begin{eqnarray*}
\textrm{at $\text{R}_{s}$: } \textrm{dim}(\mathbf{H}^{m}_{s,p}\mathbf{V}^{m}_{p}, \mathbf{H}^{m}_{s,q}\mathbf{V}^{m}_{q}) \leq (1-\lambda)m.
\end{eqnarray*}
Since the columns of each beamforming matrix are linearly independent, we will have
\begin{eqnarray}
\label{eq:HVHV}
\textrm{dim}(\mathbf{H}^{m}_{s,p}\mathbf{V}^{m}_{p} \cap \mathbf{H}^{m}_{s,q}\mathbf{V}^{m}_{q}) &\geq& 2\lambda m - (1-\lambda)m, \\ \nonumber &=& (3\lambda -1)m.
\end{eqnarray}

Suppose that an $m \times 1$ vector $\mathbf{v}^{m}_{*}$ is included in the vector space $(\mathbf{H}^{m}_{s,*})^{-1} \cdot \big(\textrm{span}(\mathbf{H}^{m}_{s,p}\mathbf{V}^{m}_{p}) \cap \textrm{span}(\mathbf{H}^{m}_{s,q}\mathbf{V}^{m}_{q})\big)$, where $\mathbf{H}^{m}_{s,*}$ is an $m \times m$ diagonal matrix whose changing pattern of diagonal entries is the same as $\mathbf{H}^{m}_{s,p}$ and $\mathbf{H}^{m}_{s,q}$. By Lemma 1,
\begin{eqnarray*}
\mathbf{H}^{m}_{s,*}\mathbf{v}^{m}_{*} \in \textrm{span}(\mathbf{H}^{m}_{s,p}\mathbf{V}^{m}_{p})
\,\,\Rightarrow\,\, \mathbf{v}^{m}_{*} \in \textrm{span}(\mathbf{V}^{m}_{p}),
\end{eqnarray*}
since $\mathbf{H}^{m}_{s,*}$ and $\mathbf{H}^{m}_{s,p}$ have the same changing pattern of their diagonal entries.
In the same manner,
\begin{eqnarray*}
\mathbf{v}^{m}_{*} \in \textrm{span}(\mathbf{V}^{m}_{q}).
\end{eqnarray*}
Thus, we can say that
\begin{eqnarray*}
\mathbf{v}^{m}_{*} \in \textrm{span}(\mathbf{V}^{m}_{p}) \cap \textrm{span}(\mathbf{V}^{m}_{q}),
\end{eqnarray*}
which implies
\begin{align}\label{eq:subset}
(\mathbf{H}^{m}_{s,*})^{-1} \cdot \big(\textrm{span}(\mathbf{H}^{m}_{s,p}\mathbf{V}^{m}_{p}) \cap \textrm{span}(\mathbf{H}^{m}_{s,q}\mathbf{V}^{m}_{q})\big)
\subseteq \textrm{span}(\mathbf{V}^{m}_{p}) \cap \textrm{span}(\mathbf{V}^{m}_{q}).
\end{align}

On the other hand, suppose that $\mathbf{v}^{m}_{*}$ is included in the vector space, $\textrm{span}(\mathbf{V}^{m}_{p}) \cap \textrm{span}(\mathbf{V}^{m}_{q}) \cap \textrm{span}(\mathbf{V}^{m}_{s})$.
If $\mathbf{H}^{m}_{s,*}\mathbf{v}^{m}_{*}$ is included in $\textrm{span}(\mathbf{H}^{m}_{s,q}\mathbf{V}^{m}_{q})$,
\begin{eqnarray}
\mathbf{H}^{m}_{s,*}\mathbf{v}^{m}_{*} \in \textrm{span}(\mathbf{H}^{m}_{s,q}\mathbf{V}^{m}_{q}) 
&\Rightarrow& \mathbf{H}^{m}_{s,*}\mathbf{v}^{m}_{*} \in \textrm{span}(\mathbf{H}^{m}_{s,s}(\mathbf{H}^{m}_{s,s})^{-1}\mathbf{H}^{m}_{s,q}\mathbf{V}^{m}_{q}), \nonumber \\ 
&\Rightarrow& \mathbf{v}^{m}_{*} \in \textrm{span}((\mathbf{H}^{m}_{s,s})^{-1}\mathbf{H}^{m}_{s,q}\mathbf{V}^{m}_{q}).
\end{eqnarray}
In this case, $\textrm{span}(\mathbf{V}^{m}_{s})$ and $\textrm{span}( (\mathbf{H}^{m}_{s,s})^{-1}\mathbf{H}^{m}_{s,q}\mathbf{V}^{m}_{q})$ have non-zero intersection since $\mathbf{v}^{m}_{*}$ is included in both of the two subspaces. Thus, it can be expressed as
\begin{eqnarray}
\textrm{dim}(\mathbf{H}^{m}_{s,s}\mathbf{V}^{m}_{s} \cap \mathbf{H}^{m}_{s,q}\mathbf{V}^{m}_{q}) > 0.
\end{eqnarray}
At $\text{R}_{s}$, the desired signal subspace is contaminated by the interfering signal subspace from $\text{T}_{q}$. It can be said that $\text{R}_{s}$ cannot decode its desired symbols, hence the vector $\mathbf{H}^{m}_{s,*}\mathbf{v}^{m}_{*}$ should not be included in $\textrm{span}(\mathbf{H}^{m}_{s,q}\mathbf{V}^{m}_{q})$ if $\mathbf{v}^{m}_{*}$ is included in $\textrm{span}(\mathbf{V}^{m}_{p}) \cap \textrm{span}(\mathbf{V}^{m}_{q}) \cap \textrm{span}(\mathbf{V}^{m}_{s})$. It can be also applied to $\textrm{span}(\mathbf{H}^{m}_{s,p}\mathbf{V}^{m}_{p})$, thus leading to
\begin{align}
(\mathbf{H}^{m}_{s,*})^{-1} \cdot \big( \textrm{span}(\mathbf{H}^{m}_{s,p}\mathbf{V}^{m}_{p}) \cap \textrm{span}(\mathbf{H}^{m}_{s,q}\mathbf{V}^{m}_{q})\big)
 \cap \big(\textrm{span}(\mathbf{V}^{m}_{p}) \cap \textrm{span}(\mathbf{V}^{m}_{q}) \cap \textrm{span}(\mathbf{V}^{m}_{s})\big) = \emptyset.\label{eq:joint}
\end{align}
From (\ref{eq:subset}) and (\ref{eq:joint}),
\begin{align*}
\textrm{dim}(\mathbf{V}^{m}_{p} \cap \mathbf{V}^{m}_{q}) 
&\geq \textrm{dim}\Big( (\mathbf{H}^{m}_{s,*})^{-1} \cdot \big(\textrm{span}(\mathbf{H}^{m}_{s,p}\mathbf{V}^{m}_{p}) \cap \textrm{span}(\mathbf{H}^{m}_{s,q}\mathbf{V}^{m}_{q})\big) \Big) + \textrm{dim}(\mathbf{V}^{m}_{p} \cap \mathbf{V}^{m}_{q} \cap \mathbf{V}^{m}_{s}) \\ \nonumber 
&= \textrm{dim}(\mathbf{H}^{m}_{s,p}\mathbf{V}^{m}_{p} \cap \mathbf{H}^{m}_{s,q}\mathbf{V}^{m}_{q}) 
 + \textrm{dim}(\mathbf{V}^{m}_{p} \cap \mathbf{V}^{m}_{q} \cap \mathbf{V}^{m}_{s}).
\end{align*}
Thus, from (\ref{eq:HVHV}), we have
\begin{align}
\label{eq:simple}
\textrm{dim}(\mathbf{V}^{m}_{p} \cap \mathbf{V}^{m}_{q}) - \textrm{dim}(\mathbf{V}^{m}_{p} \cap \mathbf{V}^{m}_{q} \cap \mathbf{V}^{m}_{s}) \geq (3\lambda -1)m.
\end{align}

This allows us to present the following lemma, which provides a bound on the intersection of the beamforming vector spaces of any two users which are in the same alignment set.

\begin{lem}\label{lem3}
For two vertices $p$ and $r$ in an alignment set, if $\Delta$ alignment edges are needed to be traversed to go from $p$ to $r$, then
\begin{eqnarray}
\label{eq:Vpr}
\textrm{dim}(\mathbf{V}^{m}_{p} \cap \mathbf{V}^{m}_{r}) \geq ((2\Delta +1)\lambda - \Delta)m.
\end{eqnarray}
\end{lem}
\begin{proof}
We prove the lemma by induction on $\Delta$. If $\Delta=1$, then inequality (\ref{eq:Vpr}) holds according to (\ref{eq:simple}). We now show that if inequality (\ref{eq:Vpr}) holds for two vertices connected with $\Delta$ alignment edges, it also holds for two vertices connected with $\Delta +1$ alignment edges. Suppose that vectices $p$ and $r_{1}$, $r_{1}$ and $r_{2}$, and $p$ and $r_{2}$ are connected with $\Delta$, $1$, and $\Delta +1$ alignment edges, respectively. 
By the property that if the vector space $\mathcal{A}$ is included in the vector space $\mathcal{B}$, i.e., $\mathcal{A} \subseteq \mathcal{B}$, then $\textrm{dim}(\mathcal{A}) \leq \textrm{dim}(\mathcal{B})$,
\begin{align}
\textrm{dim}(\mathbf{V}^{m}_{p} \cap \mathbf{V}^{m}_{r_{1}}) + \textrm{dim}(\mathbf{V}^{m}_{r_{1}} \cap \mathbf{V}^{m}_{r_{2}})  - \textrm{dim}(\mathbf{V}^{m}_{p} \cap \mathbf{V}^{m}_{r_{2}} \cap \mathbf{V}^{m}_{r_{1}}) &\leq \textrm{rank}(\mathbf{V}^{m}_{r_{1}}) \nonumber \\ 
&= \lambda m.
\end{align}
Thus, we can say that
\begin{align}
\textrm{dim}(\mathbf{V}^{m}_{p} \cap \mathbf{V}^{m}_{r_{2}}) &\geq
\textrm{dim}(\mathbf{V}^{m}_{p} \cap \mathbf{V}^{m}_{r_{2}} \cap \mathbf{V}^{m}_{r_{1}}) \nonumber \\ 
&\geq \textrm{dim}(\mathbf{V}^{m}_{p} \cap \mathbf{V}^{m}_{r_{1}}) + \textrm{dim}(\mathbf{V}^{m}_{r_{1}} \cap \mathbf{V}^{m}_{r_{2}}) - \lambda m \nonumber \\
&\geq ((2\Delta +1)\lambda - \Delta)m + (3\lambda -1)m - \lambda m \nonumber \\ 
&= ((2(\Delta +1) +1)\lambda - (\Delta +1))m. \label{eq:Vpr2r1}
\end{align}
This completes the proof.
\end{proof}


Now, assume that vertex $k$ has two incoming internal conflicts from vertices $i$ and $j$ with conflict distances $\Delta_{1}$ and $\Delta_{2}$, respectively. Without loss of generality, suppose $\Delta_{\textrm{min}}=\Delta_{1} \leq \Delta_{2}$, hence $\Delta_{2}$ must be $\Delta_{\textrm{min}}$ or $\Delta_{\textrm{min}}+1$ since vertices $i$ and $j$ are connected with an alignment edge. 
Now, since vertices $k$ and $i$ are connected with $\Delta_{\textrm{min}}$ alignment edges, Lemma \ref{lem3} implies that
\begin{eqnarray}
\label{eq:Vki}
\textrm{dim}(\mathbf{V}^{m}_{k} \cap \mathbf{V}^{m}_{i}) \geq ((2\Delta_{\textrm{min}} +1)\lambda - \Delta_{\textrm{min}})m.
\end{eqnarray}
Equation (\ref{eq:simple}) holds for vertices $i$, $j$, and $k$, thus
\begin{align}
\label{eq:Vijk}
\textrm{dim}(\mathbf{V}^{m}_{i} \cap \mathbf{V}^{m}_{j}) - \textrm{dim}(\mathbf{V}^{m}_{i} \cap \mathbf{V}^{m}_{j} \cap \mathbf{V}^{m}_{k}) \geq (3\lambda -1)m.
\end{align}
From (\ref{eq:Vki}) and (\ref{eq:Vijk}), we know that
\begin{eqnarray}
((2\Delta_{\textrm{min}} +1)\lambda - \Delta_{\textrm{min}})m + (3\lambda -1)m &\leq&
\textrm{dim}(\mathbf{V}^{m}_{k} \cap \mathbf{V}^{m}_{i}) + \textrm{dim}(\mathbf{V}^{m}_{i} \cap \mathbf{V}^{m}_{j}) \nonumber \\ 
&& - \textrm{dim}(\mathbf{V}^{m}_{i} \cap \mathbf{V}^{m}_{j} \cap \mathbf{V}^{m}_{k}) \nonumber \\ 
&\leq& \lambda m,
\end{eqnarray}
which implies that
\begin{eqnarray}
\lambda \leq \frac{\Delta_{\textrm{min}}+1}{2\Delta_{\textrm{min}}+3}.
\end{eqnarray}
This completes the proof.
\end{proof}

\begin{prop}\label{prop2}
For partially connected $K$-user interference networks with reconfigurable antenna switching among any number of preset modes at the receivers,
the linear symmetric DoF is upper-bounded by
\begin{eqnarray}
\emph{LDoF}_{\emph{sym}} \leq \frac{2L_{\emph{min,odd}}}{5L_{\emph{min,odd}}+1}.
\end{eqnarray}
\end{prop}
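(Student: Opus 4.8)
The plan is to locate a shortest odd internal conflict cycle and convert the alignment-plus-decodability constraints it imposes into a bound strictly sharper than the $\tfrac{2}{5}$ furnished by Proposition \ref{prop1}. Fix a shortest internal conflict cycle of odd length, say $v_{1}\to v_{2}\to\cdots\to v_{L}\to v_{1}$ with $L=L_{\text{min,odd}}$, and let $\text{T}_{w}$ be the transmitter outside the cycle that interferes at every $\text{R}_{v_{i}}$. Since it suffices to produce a \emph{valid} upper bound, I would discard at each $\text{R}_{v_{i+1}}$ every interferer except $\text{T}_{v_{i}}$ (the cycle edge) and $\text{T}_{w}$; keeping fewer interference constraints can only enlarge the achievable region, so any bound obtained this way is legitimate. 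By definition $\{v_{i},w\}\subseteq\mathcal{I}_{v_{i+1}}$, so $v_{i}$ and $w$ are joined by an alignment edge, and all of $v_{1},\ldots,v_{L},w$ lie in one alignment set with pairwise conflict distance at most two.

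First I would reproduce, at each receiver $\text{R}_{v_{i+1}}$ with aligned pair $(v_{i},w)$ and desired index $v_{i+1}$, the exact chain that produced (\ref{eq:HVHV}) and (\ref{eq:simple}). This yields two families of relations. The dimension-counting family gives, inside $W:=\textrm{span}(\mathbf{V}^{m}_{w})$, that the shadow subspaces $S_{i}:=\textrm{span}(\mathbf{V}^{m}_{v_{i}})\cap W$ satisfy $\textrm{dim}(S_{i})\le(1-2\lambda)m$ and $\textrm{dim}(S_{i}\cap S_{i+1})\le\textrm{dim}(S_{i})-(3\lambda-1)m$. The stronger family comes from the full decodability requirement: the desired subspace $\mathbf{H}^{m}_{v_{i+1},v_{i+1}}\mathbf{V}^{m}_{v_{i+1}}$ must meet the entire aligned interference space $\textrm{span}(\mathbf{H}^{m}_{v_{i+1},v_{i}}\mathbf{V}^{m}_{v_{i}})+\textrm{span}(\mathbf{H}^{m}_{v_{i+1},w}\mathbf{V}^{m}_{w})$ only trivially, which after pulling back through $(\mathbf{H}^{m}_{v_{i+1},v_{i+1}})^{-1}$ and invoking Lemma 1 becomes a \emph{twisted} disjointness relation coupling $\mathbf{V}^{m}_{v_{i+1}}$, $\mathbf{V}^{m}_{v_{i}}$ and $\mathbf{V}^{m}_{w}$ through the (unknown, generic) diagonal ratios of the channels into $\text{R}_{v_{i+1}}$.

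The decisive step, and the one I expect to be the main obstacle, is to turn these per-edge relations into a parity-sensitive bound. I would first point out \emph{why} the purely dimension-counting constraints cannot suffice: the $W$-shadow relations above are satisfiable at $\lambda=\tfrac{2}{5}$ for every $L$ (for $L=3$, take three pairwise-transverse half-dimensional subspaces of $W$), hence they are parity-blind and only recover Proposition \ref{prop1}. The sharpening must therefore retain the channel twists and propagate structure around the \emph{whole} loop. The target is the inequality $\lambda m\le L(2-5\lambda)m$, which rearranges to $\lambda(5L+1)\le 2L$, i.e. $\textrm{LDoF}_{\text{sym}}\le\frac{2L}{5L+1}$. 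To reach it I would attach to each edge a nonnegative full-space quantity $\kappa_{i}$ bounded by the per-edge slack $\kappa_{i}\le(2-5\lambda)m$ (coming from $\textrm{dim}(S_{i})\le(1-2\lambda)m$ combined with the $(3\lambda-1)m$ alignment deficit), and then prove the matching lower bound $\sum_{i}\kappa_{i}\ge\lambda m$. The latter is exactly the parity-sensitive statement: if the cycle could be two-coloured so that the twisted interference configurations alternate between two fixed patterns, the $\kappa_{i}$ could all be made to vanish and $\lambda=\tfrac{2}{5}$ would survive; an odd cycle forbids such a two-colouring, so closing the loop after an odd number of alignment steps leaves an irreducible $\lambda m$-dimensional deficit that the per-edge slacks must absorb.

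Concretely, I would make the deficit rigorous by tracking a single generic ``interference seed'' as the alignment at $\text{R}_{v_{i+1}}$ forces it from $\mathbf{V}^{m}_{v_{i}}$ into $\mathbf{V}^{m}_{v_{i+1}}$, showing that the composition of the $L$ generic diagonal twists around the loop cannot map the seed's subspace back to itself when $L$ is odd, and charging the resulting obstruction against the aligned-interference budgets $\textrm{dim}(\mathbf{H}^{m}_{v_{i+1},v_{i}}\mathbf{V}^{m}_{v_{i}}\cap\mathbf{H}^{m}_{v_{i+1},w}\mathbf{V}^{m}_{w})\ge(3\lambda-1)m$ of (\ref{eq:HVHV}). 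The hard part is precisely this genericity estimate for the composed twist: unlike the even case one cannot cancel the twists in pairs, and extracting \emph{exactly} the $\lambda m$ deficit (rather than a weaker bound that would only reproduce $\tfrac{2}{5}$) is where the real work lies. If a direct global count proves unwieldy, I would fall back on an induction on $L$, peeling off one cycle vertex at a time while carrying the accumulated twist and the running dimension budget, analogous to the inductive argument used for Lemma \ref{lem3}.
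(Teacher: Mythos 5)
Your reduction (isolating the odd cycle, discarding all other interference links) and your per-edge relations are exactly the paper's starting point: they coincide with \eqref{eq:HVHV} and \eqref{eq:simple}, i.e., with the family \eqref{eq:simple2}. But your proposal stops being a proof at precisely the point you yourself flag as ``where the real work lies.'' The parity-sensitive lower bound $\sum_{i}\kappa_{i}\geq\lambda m$ is never established; the ``composed twist'' genericity estimate is described, not proven; and even the per-edge slack bound $\kappa_{i}\leq(2-5\lambda)m$ rests on the unjustified claim $\dim(S_{i})\leq(1-2\lambda)m$ --- what actually follows from alignment at $\text{R}_{v_{i+1}}$ is the \emph{lower} bound $\dim(S_{i})\geq(3\lambda-1)m$ (the quantity $(1-2\lambda)m$ bounds the part of $\mathbf{V}^{m}_{v_{i}}$ \emph{outside} $W$, not the intersection), and no upper bound of the form you state is derivable from the constraints you list. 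So as it stands, this is a research plan with a genuine gap at its decisive step, not a proof of Proposition \ref{prop2}.

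That said, your central objection --- that the dimension-counting constraints \eqref{eq:simple2} are parity-blind and satisfiable at $\lambda=\tfrac{2}{5}$ for every $L$ --- is correct, and it cuts deeper than you may realize, because the paper's own proof proceeds by pure dimension counting. Its key step \eqref{eq:union} asserts, for the subspaces $A=\mathbf{V}^{m}_{1}\cap\mathbf{V}^{m}_{3}$, $B=\mathbf{V}^{m}_{1}\cap\mathbf{V}^{m}_{4}$, $C=\mathbf{V}^{m}_{1}\cap\mathbf{V}^{m}_{5}$ of $\mathbf{V}^{m}_{1}$, the inclusion-exclusion bound $\dim(\mathbf{V}^{m}_{1})\geq\dim A+\dim B+\dim C-\dim(A\cap B)-\dim(B\cap C)-\dim(A\cap C)$, which is false for linear subspaces (three distinct lines through the origin in a two-dimensional $\mathbf{V}^{m}_{1}$ give $2\geq 3$). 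Your construction --- three pairwise-transverse $(3\lambda-1)m$-dimensional subspaces of $\mathbf{V}^{m}_{1}$ realized as the intersections $\mathbf{V}^{m}_{1}\cap\mathbf{V}^{m}_{i}$ --- is exactly a feasible point at $\lambda=\tfrac{2}{5}$, $L=3$ satisfying every instance of \eqref{eq:simple2} while violating \eqref{eq:union}; hence no chain of universally valid subspace-dimension inequalities can close the paper's argument from \eqref{eq:simple2} alone. (The induction in Lemma \ref{lem4} has the same defect: its final step claims $\dim X-\dim(X\cap Y)+\dim Y=\dim\left(\mathbf{V}^{m}_{1}\cap\mathbf{V}^{m}_{5,7,\ldots,L+2}\right)$, where only ``$\leq$'' holds, which points the chain the wrong way.) So a correct proof must indeed retain channel-level structure beyond \eqref{eq:simple2} --- the disjointness relation \eqref{eq:joint} and genericity of the diagonal channel matrices --- which is exactly the program you outline; but neither you nor the paper has actually carried that program out.
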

\begin{proof}
Consider any internal conflict cycle of odd length $L$ in the network. Without loss of generality, assume this cycle is between nodes $[L+1,L,...,2]$, all receiving interference from transmitter $\text{T}_1$. This implies that each node $i \in [2:L]$ has incoming internal conflicts from nodes 1 and $i+1$, and node $L+1$ has incoming internal conflicts from nodes 1 and 2. Let us remove all the other interfering links in the network topology, which does not hurt the symmetric DoF. As in the proof of Proposition \ref{prop1}, suppose that each user sends $\lambda m$ data symbols over $m$ channel uses; i.e., $n_i = \lambda m, \forall i \in [1:K]$. Equation (\ref{eq:simple}) holds for any set of three vertices where two of them have outgoing conflict edges towards the third one. Thus,
\begin{align}
\label{eq:simple2}
\dim(\mathbf{V}_p^m \cap \mathbf{V}_q^m) - \dim(\mathbf{V}_p^m \cap \mathbf{V}_q^m \cap \mathbf{V}_s^m)\geq (3\lambda-1)m, ~ \forall \{p,q\}=\mathcal{I}_s.
\end{align}
For $s\in\{3,4\}$, we can write the above inequality as
\begin{align}
\dim(\mathbf{V}_1^m \cap \mathbf{V}_4^m) - \dim(\mathbf{V}_1^m \cap \mathbf{V}_4^m \cap \mathbf{V}_3^m) &\geq (3\lambda-1)m \label{eq:rx3}\\
\dim(\mathbf{V}_1^m \cap \mathbf{V}_5^m) - \dim(\mathbf{V}_1^m \cap \mathbf{V}_5^m \cap \mathbf{V}_4^m) &\geq (3\lambda-1)m. \label{eq:rx4}
\end{align}
On the other hand, we have
\begin{align}
\lambda m=\text{rank}(\mathbf{V}_1^m) &\geq \dim(\mathbf{V}_1^m \cap \mathbf{V}_3^m)+\dim(\mathbf{V}_1^m \cap \mathbf{V}_4^m)+\dim(\mathbf{V}_1^m \cap \mathbf{V}_5^m) \nonumber \\
& ~~ - \dim(\mathbf{V}_1^m \cap \mathbf{V}_4^m \cap \mathbf{V}_3^m) - \dim(\mathbf{V}_1^m \cap \mathbf{V}_5^m \cap \mathbf{V}_4^m)- \dim(\mathbf{V}_1^m \cap \mathbf{V}_3^m \cap \mathbf{V}_5^m).\label{eq:union}
\end{align}
Combining \eqref{eq:rx3}-\eqref{eq:rx4} with \eqref{eq:union}, we will get
\begin{align*}
2(3\lambda-1)m + \dim(\mathbf{V}_1^m \cap \mathbf{V}_3^m ) - \dim(\mathbf{V}_1^m \cap \mathbf{V}_3^m \cap \mathbf{V}_5^m) \leq \lambda m.
\end{align*}
This implies that in general, for $s\in\{i,i+1\}, \forall i\in \{3,5,7,...,L\}$, we can write
\begin{align*}
2(3\lambda-1)m + \dim(\mathbf{V}_1^m \cap \mathbf{V}_i^m ) - \dim(\mathbf{V}_1^m \cap \mathbf{V}_i^m \cap \mathbf{V}_{i+2}^m) \leq \lambda m,
\end{align*}
assuming $\mathbf{V}_{L+2}^m=\mathbf{V}_{2}^m$. Adding all these inequalities together, we will get
\begin{align}
\label{eq:sumV}
(L-1)(3\lambda-1)m + \sum_{\substack{i=3 \\ i \text{ odd}}}^L \left[\dim(\mathbf{V}_1^m \cap \mathbf{V}_i^m ) - \dim(\mathbf{V}_1^m \cap \mathbf{V}_i^m \cap \mathbf{V}_{i+2}^m) \right] \leq \frac{L-1}{2}\lambda m.
\end{align}
On the other hand, for $s=2$, we can write \eqref{eq:simple2} as
\begin{align}
(3\lambda-1)m \leq \dim(\mathbf{V}_1^m \cap \mathbf{V}_3^m) - \dim(\mathbf{V}_1^m \cap \mathbf{V}_3^m \cap \mathbf{V}_2^m). \label{eq:rx2}
\end{align}
Adding \eqref{eq:sumV} and \eqref{eq:rx2} and rearranging the terms, we will get
\begin{align}
\label{eq:subfin}
 & \dim(\mathbf{V}_1^m \cap \mathbf{V}_3^m \cap \mathbf{V}_2^m) - \dim(\mathbf{V}_1^m \cap \mathbf{V}_3^m \cap \mathbf{V}_5^m)  \nonumber \\
&\quad+ \sum_{\substack{i=5 \\ i \text{ odd}}}^L \left[\dim(\mathbf{V}_1^m \cap \mathbf{V}_i^m ) - \dim(\mathbf{V}_1^m \cap \mathbf{V}_i^m \cap \mathbf{V}_{i+2}^m) \right] \leq \left[ L-\frac{5L+1}{2} \lambda \right] m.
\end{align}

Clearly, for $L=3$, the LHS of \eqref{eq:subfin} is equal to zero since $\mathbf{V}_5^m=\mathbf{V}_{L+2}^m=\mathbf{V}_2^m$. We will now present the following lemma which states that for odd values of $L\geq 5$, the LHS of \eqref{eq:subfin} is non-negative.

\begin{lem}\label{lem4}
The following inequality holds for odd $L\geq 5$.
\begin{align}\label{eq:lem_L}
\sum_{\substack{i=5 \\ i \emph{ odd}}}^{L-2}\left[\emph{dim}(\mathbf{V}^{m}_{1} \cap \mathbf{V}^{m}_{i})-\emph{dim}(\mathbf{V}^{m}_{1} \cap \mathbf{V}^{m}_{i} \cap \mathbf{V}^{m}_{i+2})\right] + \emph{dim}(\mathbf{V}^{m}_{1} \cap \mathbf{V}^{m}_{L}) \geq \emph{dim}(\mathbf{V}^{m}_{1} \cap \mathbf{V}^{m}_{5,7,\ldots,L}),
\end{align}
where $\mathbf{V}^{m}_{5,7,\ldots,L}$ denotes $\emph{span}([\mathbf{V}^{m}_{5} ~ \mathbf{V}^{m}_{7} ~ \cdots ~ \mathbf{V}^{m}_{L}])$.
\end{lem}

\begin{proof}
We prove the lemma by induction on $L$. If $L=5$, then (\ref{eq:lem_L}) holds with equality, since both sides will be equal to $\dim(\mathbf{V}^{m}_{1} \cap \mathbf{V}^{m}_{5})$.
We now show that if inequality (\ref{eq:lem_L}) holds for any odd $L\geq 5$, it also holds for $L+2$. If inequality (\ref{eq:lem_L}) holds for $L$, then we can write
\begin{align*}
&\sum_{\substack{i=5 \\ i \textrm{ odd}}}^{L}\left[\textrm{dim}(\mathbf{V}^{m}_{1} \cap \mathbf{V}^{m}_{i})-\textrm{dim}(\mathbf{V}^{m}_{1} \cap \mathbf{V}^{m}_{i} \cap \mathbf{V}^{m}_{i+2})\right] + \textrm{dim}(\mathbf{V}^{m}_{1} \cap \mathbf{V}^{m}_{L+2})  \\ 
& \qquad \geq \textrm{dim}(\mathbf{V}^{m}_{1} \cap \mathbf{V}^{m}_{5,7,\ldots,L}) - \textrm{dim}(\mathbf{V}^{m}_{1} \cap \mathbf{V}^{m}_{L} \cap \mathbf{V}^{m}_{L+2}) + \textrm{dim}(\mathbf{V}^{m}_{1} \cap \mathbf{V}^{m}_{L+2})  \\ 
& \qquad \geq \textrm{dim}(\mathbf{V}^{m}_{1} \cap \mathbf{V}^{m}_{5,7,\ldots,L}) - \textrm{dim}(\mathbf{V}^{m}_{1} \cap \mathbf{V}^{m}_{5,7,\ldots,L} \cap \mathbf{V}^{m}_{L+2}) + \textrm{dim}(\mathbf{V}^{m}_{1} \cap \mathbf{V}^{m}_{L+2}) \\ 
& \qquad = \textrm{dim}(\mathbf{V}^{m}_{1} \cap \mathbf{V}^{m}_{5,7,\ldots,L+2}).
\end{align*}
Therefore, the inequality is also true for $L+2$ and the proof is complete.
\end{proof}


Now, we can write the LHS of \eqref{eq:subfin} as
\begin{align*}
& \Scale[.95]{\dim(\mathbf{V}_1^m \cap \mathbf{V}_3^m \cap \mathbf{V}_2^m) - \dim(\mathbf{V}_1^m \cap \mathbf{V}_3^m \cap \mathbf{V}_5^m)  +} \sum_{\substack{i=5 \\ i \text{ odd}}}^L \Scale[.95]{ \left[\dim(\mathbf{V}_1^m \cap \mathbf{V}_i^m ) - \dim(\mathbf{V}_1^m \cap \mathbf{V}_i^m \cap \mathbf{V}_{i+2}^m) \right]}\\
& ~  \Scale[.95]{=\textrm{dim}(\mathbf{V}^{m}_{1} \cap \mathbf{V}^{m}_{3} \cap \mathbf{V}^{m}_{2}) - \textrm{dim}(\mathbf{V}^{m}_{1} \cap \mathbf{V}^{m}_{3} \cap \mathbf{V}^{m}_{5}) +} \sum_{\substack{i=5 \\ i \text{ odd}}}^{L-2} \Scale[.95]{\left[\textrm{dim}(\mathbf{V}^{m}_{1} \cap \mathbf{V}^{m}_{i})-\textrm{dim}(\mathbf{V}^{m}_{1} \cap \mathbf{V}^{m}_{i} \cap \mathbf{V}^{m}_{i+2})\right]} \\ 
&  \quad \Scale[.95]{+ \textrm{dim}(\mathbf{V}^{m}_{1} \cap \mathbf{V}^{m}_{L})-\textrm{dim}(\mathbf{V}^{m}_{1} \cap \mathbf{V}^{m}_{L} \cap \mathbf{V}^{m}_{2})} \\ 
& ~ \Scale[.95]{\overset{(a)}{\geq} \textrm{dim}(\mathbf{V}^{m}_{1} \cap \mathbf{V}^{m}_{3} \cap \mathbf{V}^{m}_{2}) - \textrm{dim}(\mathbf{V}^{m}_{1} \cap \mathbf{V}^{m}_{3} \cap \mathbf{V}^{m}_{5}) + \textrm{dim}(\mathbf{V}^{m}_{1} \cap \mathbf{V}^{m}_{5,7,\ldots,L})-\textrm{dim}(\mathbf{V}^{m}_{1} \cap \mathbf{V}^{m}_{L} \cap \mathbf{V}^{m}_{2})} \\ 
&~ \Scale[.95]{\geq \textrm{dim}(\mathbf{V}^{m}_{1} \cap \mathbf{V}^{m}_{3} \cap \mathbf{V}^{m}_{2}) - \textrm{dim}(\mathbf{V}^{m}_{1} \cap \mathbf{V}^{m}_{3} \cap \mathbf{V}^{m}_{5,7,\ldots,L}) + \textrm{dim}(\mathbf{V}^{m}_{1} \cap \mathbf{V}^{m}_{5,7,\ldots,L})} \\ 
& \quad \Scale[.95]{-\textrm{dim}(\mathbf{V}^{m}_{1} \cap \mathbf{V}^{m}_{5,7,\ldots,L} \cap \mathbf{V}^{m}_{2})} \\ 
&~ \Scale[.95]{\geq \textrm{dim}(\mathbf{V}^{m}_{1} \cap \mathbf{V}^{m}_{3} \cap \mathbf{V}^{m}_{5,7,\ldots,L} \cap \mathbf{V}^{m}_{2})  - \textrm{dim}(\mathbf{V}^{m}_{1} \cap \mathbf{V}^{m}_{3} \cap \mathbf{V}^{m}_{5,7,\ldots,L}) + \textrm{dim}(\mathbf{V}^{m}_{1} \cap \mathbf{V}^{m}_{5,7,\ldots,L})} \\ 
& \quad \Scale[.95]{-\textrm{dim}(\mathbf{V}^{m}_{1} \cap \mathbf{V}^{m}_{5,7,\ldots,L} \cap \mathbf{V}^{m}_{2})}  \\ 
&~ \Scale[.95]{\geq 0,}
\end{align*}
where in (a), we have invoked Lemma \ref{lem4}. This, together with \eqref{eq:subfin}, implies that 
\begin{eqnarray*}
\lambda \leq \frac{2L}{5L+1}.
\end{eqnarray*}
Since the function $f(L)=\frac{2L}{5L+1}$ is decreasing in $L$, considering the shortest cycle of odd length $L_{\text{min,odd}}$ yields the tightest upper bound, hence completing the proof.
\end{proof}

\section{Achievable Scheme for Topologies with at most Two Co-interferers}
\label{sec_2_coint}
In this section, we prove Theorem \ref{thm3} by showing that the upper bound on the linear symmetric DoF in Theorem \ref{thm2} is tight if all the transmitters have at most two co-interferers.
We also introduce two examples: one is a network topology that can achieve the upper bound, and the other is a network topology in which there exists a fork on the alignment graph and cannot achieve the upper bound. 

\begin{proof}[Proof of Theorem 3]
For a network topology in which the transmitters have at most two co-interferers, we show that 
each transmitter can send $\Delta_{\textrm{min}}+1$ data symbols over $2\Delta_{\textrm{min}}+3$ channel uses; i.e., $n_{i}=\Delta_{\textrm{min}}+1, \forall i \in [1:K]$ and $m=2\Delta_{\textrm{min}}+3$. Since in such topologies, the alignment graph has no forks, each alignment set can be represented either as a concatenated line or a cycle of alignment edges. The beamforming vectors are determined to meet the following conditions so that the interfering signals are aligned into $(\Delta_{\textrm{min}}+2)$-dimensional signal subspace.

\begin{itemize}
\item Each vertex has $\Delta_{\textrm{min}}+1$ beamforming vectors.
\item Two vertices connected with an alignment edge have at least $\Delta_{\textrm{min}}$ common beamforming vectors that the vertices to which $\Delta_{\textrm{min}}$ or more alignment edges are needed to be traversed to go from both of them do not have.
\end{itemize}

The above conditions can be realized when vertices have the beamforming vectors in a consecutive order by removing the first one and adding a new one at the last (for a concatenated line) or sliding out each other's way as meeting start and end (for a cycle). Two examples that show how to determine the beamforming vectors with $\Delta_{\textrm{min}}=1$ are given in Fig \ref{exBV}. 
Moreover, to guarantee that the desired signal subspaces are independent of the interference signal subspaces, the following conditions should be satisfied.

\begin{itemize}
\item If the source and the destination of a conflict edge have common beamforming vectors, these vectors should have at least two non-zero entries. 
\item Two vertices that are not included in the same alignment set should not have common beamforming vectors.
\item Any $2\Delta_{\textrm{min}}+3$ beamforming vectors should be linearly independent.
\end{itemize}
\begin{figure}[t]
    \centerline{\includegraphics[width=8.0cm]{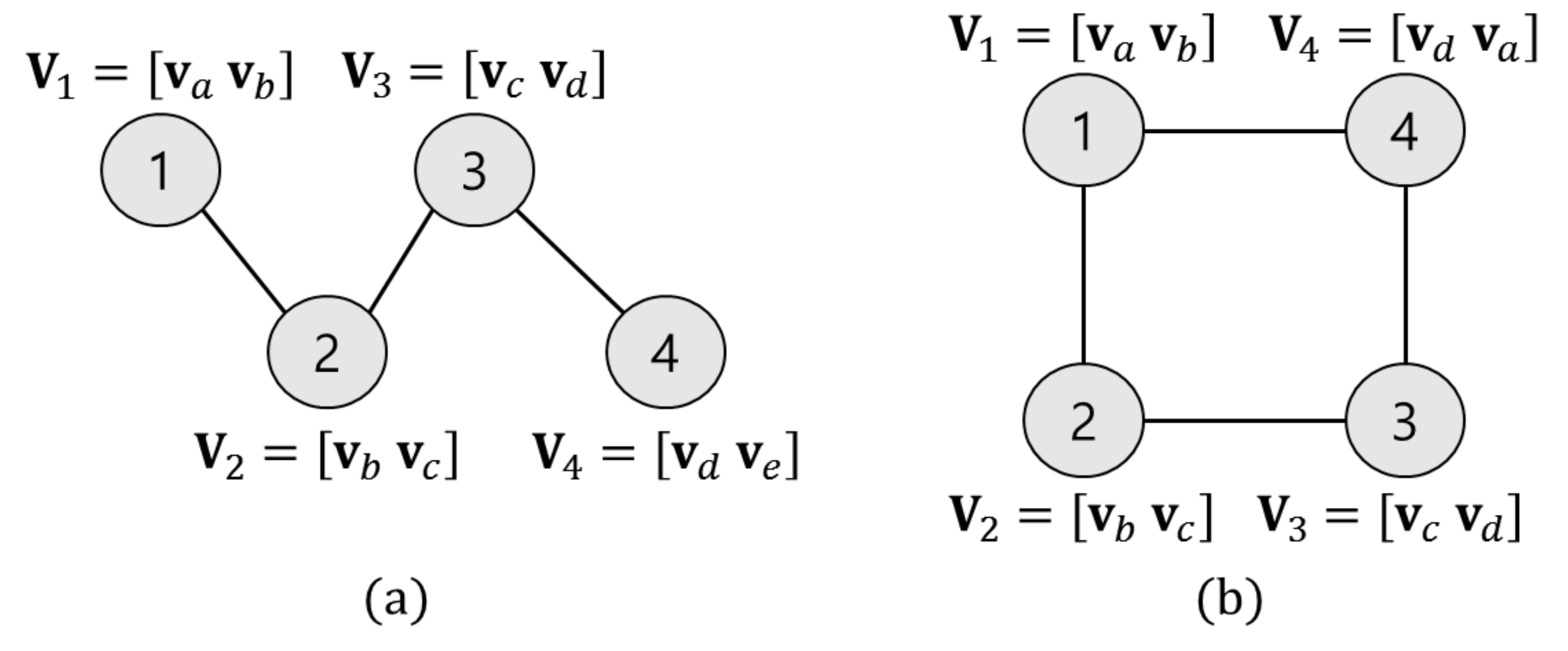}}
    \caption{Two examples of the beamforming vectors for 4 vertices of an alignment set with $\Delta_{\textrm{min}}=1$. ((a) a concatanated line, and (b) a cycle)}
    \label{exBV}
    \vspace{-4mm}
\end{figure}

Note that there cannot be more than three interferers at any receiver with the constraint of at most two co-interferers. If $\text{R}_{k}$ is interfered by three transmitters, interferers do not have any other co-interferer since the number of their co-interferers is already two. Thus, their corresponding vertices are connected with triangle-shaped alignment edges and $k$ does not belong to this alignment set. Since they have a total of $\Delta_{\textrm{min}}+2$ beamforming vectors independent of the beamforming vectors of $\text{T}_{k}$, $\text{R}_{k}$ can achieve $\frac{\Delta_{\textrm{min}}+1}{2\Delta_{\textrm{min}}+3}$ linear DoF.

For the receivers interfered by less than three transmitters, we can classify them into three cases by the number of incoming internal conflicts.

\emph{1) Two incoming internal conflicts}:
In this case, the receiver is connected to two transmitters in the same alignment set. We assume that vertex $k$ has two incoming internal conflicts from $i$ and $j$ with minimum conflict distance $\Delta_{\textrm{min}}$. As mentioned above, $\text{T}_{i}$ and $\text{T}_{j}$ have common $\Delta_{\textrm{min}}$ beamforming vectors that $\text{T}_{k}$ does not have. It can be denoted as
\begin{eqnarray}
\mathbf{v}^{m}_{i,n} &=& \mathbf{v}^{m}_{j,n}, \,\,\,\,\, \forall n \in [2:\Delta_{\textrm{min}}+1].
\end{eqnarray}
For the beamforming vectors $\mathbf{v}^{m}_{i,n}$ and $\mathbf{v}^{m}_{j,n}, \forall n \in [2:\Delta_{\textrm{min}}+1]$, the interfering signals $\mathbf{H}^{m}_{k,i}\mathbf{v}^{m}_{i,n}$ and $\mathbf{H}^{m}_{k,j}\mathbf{v}^{m}_{j,n}$ can be aligned into a 1-dimensional signal subspace if $\text{R}_{k}$ has the identical preset mode at the time slots when $\mathbf{v}^{m}_{i,n}$ and $\mathbf{v}^{m}_{j,n}$ have non-zero entries at the corresponding rows. On the other hand, the desired signal $\mathbf{H}^{m}_{k,k}\mathbf{v}^{m}_{k,n}, \forall n \in [1:\Delta_{\textrm{min}}+1]$ can have an independent signal subspace at $\text{R}_{k}$ although $\mathbf{v}^{m}_{k,n}$ is a common beamforming vector with $\text{T}_{i}$ or $\text{T}_{j}$, if $\text{R}_{k}$ has at least two preset modes at the time slots when $\mathbf{v}^{m}_{k,n}$ has non-zero entries at the corresponding rows.
The received signal at $\text{R}_{k}$ over $2\Delta_{\textrm{min}}+3$ channel uses is given by
\begin{align*}
 \mathbf{y}^{m}_{k} =  \mathbf{H}^{m}_{k,k}\mathbf{V}^{m}_{k}\mathbf{s}^{m}_{k} + \underbrace{\sum_{n=2}^{\Delta_{\textrm{min}}+1} \left(\mathbf{H}^{m}_{k,i}\mathbf{v}^{m}_{i,n}s_{i,n}
+\mathbf{H}^{m}_{k,j}\mathbf{v}^{m}_{j,n}s_{j,n}\right)}_{\textrm{rank} =  \Delta_{\textrm{min}}}
 + \underbrace{\left(\mathbf{H}^{m}_{k,i}\mathbf{v}^{m}_{i,1}s_{i,1}
+\mathbf{H}^{m}_{k,j}\mathbf{v}^{m}_{j,1}s_{j,1}\right)}_{\textrm{rank} =  2} + \mathbf{z}^{m}_{k}.
\end{align*}

Hence, $\text{R}_{k}$ can decode the desired $\Delta_{\textrm{min}}+1$ symbols over $2\Delta_{\textrm{min}}+3$ channel uses, thus $\frac{\Delta_{\textrm{min}}+1}{2\Delta_{\textrm{min}}+3}$ linear DoF is achievable.

\emph{2) One incoming internal conflict}:
If vertex $k$ has a single incoming internal conflict from vertex $i$, $\text{R}_{k}$ is only interfered by $\text{T}_{i}$. It can decode all desired symbols since it receives only $2(\Delta_{\textrm{min}}+1)$ symbols over $2\Delta_{\textrm{min}}+3$ channel uses, although $\text{T}_{k}$ and $\text{T}_{i}$ may have some common beamforming vectors. If $\mathbf{V}^{m}_{k}$ and $\mathbf{V}^{m}_{i}$ have the same columns, $\text{R}_{k}$ should have at least two preset modes at the time slots when they have non-zero entries at the corresponding rows.

\emph{3) No incoming internal conflicts}:
If vertex $k$ has no incoming internal conflicts, $\text{R}_{k}$ receives the interfering signals from the transmitters which are included in another alignment set $\mathcal{S}$. Since two vertices connected with an alignment edge have the same $\Delta_{\textrm{min}}$ beamforming vectors, the beamforming vectors of one or two interferers can be determined with at most $\Delta_{\textrm{min}}+2$ vectors. Thus, the interfering signals can be aligned into $(\Delta_{\textrm{min}}+2)$-dimensional signal subspace at $\text{R}_{k}$ without preset mode switching. The received signal at $\text{R}_{k}$ over $2\Delta_{\textrm{min}}+3$ channel uses is given by
\begin{eqnarray}
\mathbf{y}^{m}_{k} = \mathbf{H}^{m}_{k,k}\mathbf{V}^{m}_{k}\mathbf{s}_{k} + \underbrace{\sum\nolimits_{i \in \mathcal{I}_k} \mathbf{H}^{m}_{k,i}\mathbf{V}^{m}_{i}\mathbf{s}_{i}}_{\textrm{rank} \leq \Delta_{\textrm{min}}+2}+\mathbf{z}^{m}_{k}.
\end{eqnarray}
Therefore, the linear DoF of $\frac{\Delta_{\textrm{min}}+1}{2\Delta_{\textrm{min}}+3}$ is achievable in all cases and the proof is complete.
\end{proof}

\begin{ex}

\begin{figure}[t]
    \centerline{\includegraphics[width=8.0cm]{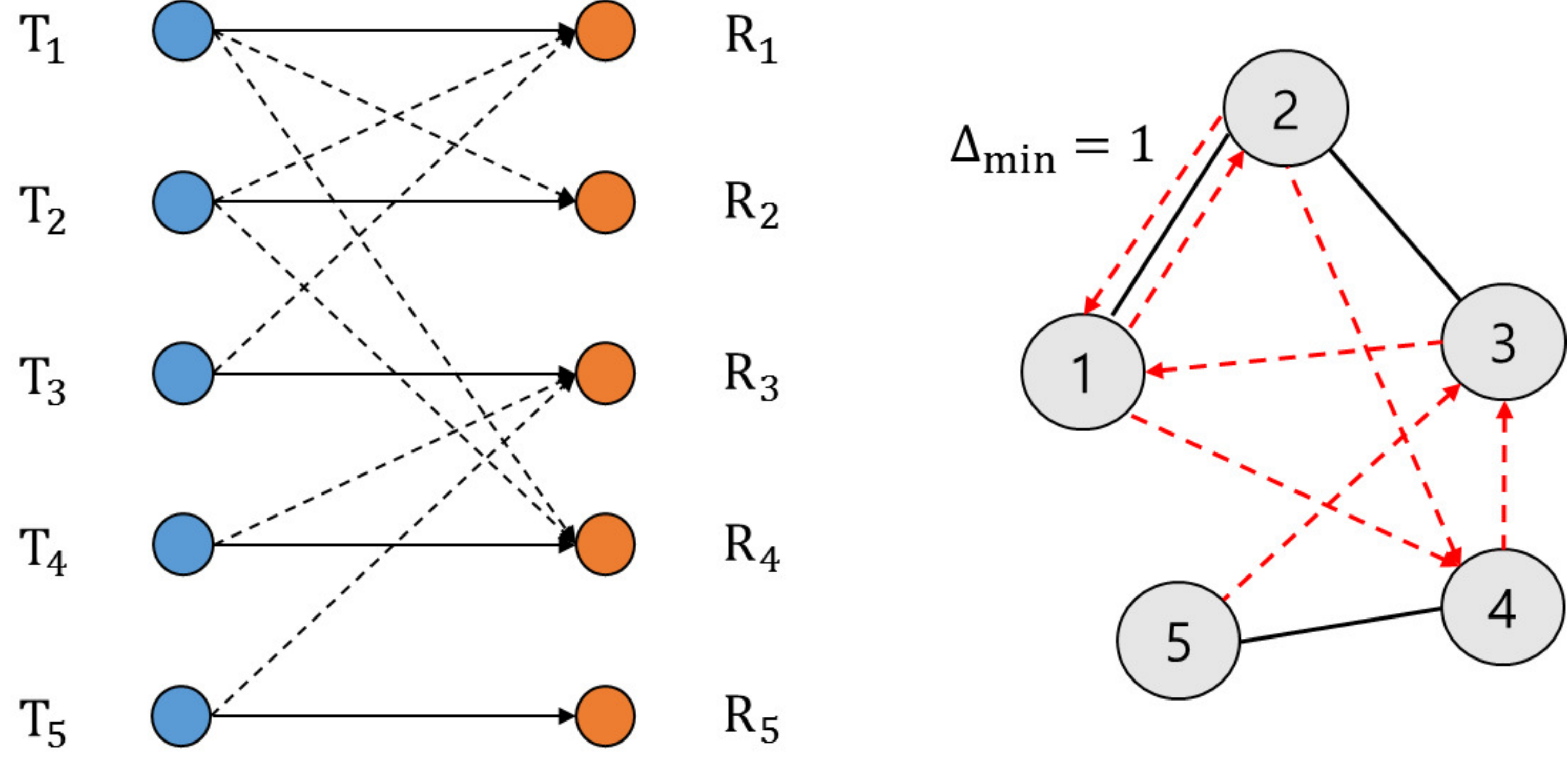}}
    \caption{A $5$-user interference network and its alignment and conflict graphs.}
    \label{DoF25}
    \vspace{-4mm}
\end{figure}

Consider a partially connected $5$-user interference network illustrated in Fig. \ref{DoF25}. According to its corresponding alignment and conflict graphs, the alignment sets are determined as
\begin{eqnarray}
\mathcal{S}_{1} = \{1, 2, 3\}, \,\,\,\,\, \mathcal{S}_{2} = \{4, 5\}.
\end{eqnarray}
In the alignment and conflict graphs, vertex $1$ has two incoming internal conflicts with $\Delta_{\textrm{min}}=1$. Therefore, since the number of co-interferers of all the transmitters is bounded by 2, Theorem 3 implies that a linear symmetric DoF of $\frac{2}{5}$ is achievable, through a  scheme where each transmitter sends 2 data symbols over 5 channel uses. According to the introduced linear scheme, the beamforming matrices are determined as
\begin{eqnarray*}
\mathbf{V}^{5}_{1}
= [\mathbf{v}^{5}_{a}  \hspace{3mm} \mathbf{v}^{5}_{b}], \,\,\,\,\,
\mathbf{V}^{5}_{2}
= [\mathbf{v}^{5}_{b} \hspace{3mm}  \mathbf{v}^{5}_{c}], \,\,\,\,\,
\mathbf{V}^{5}_{3}
= [\mathbf{v}^{5}_{c}  \hspace{3mm} \mathbf{v}^{5}_{d}], \,\,\,\,\,
\mathbf{V}^{5}_{4}
= [\mathbf{v}^{5}_{e} \hspace{3mm}  \mathbf{v}^{5}_{f}], \,\,\,\,\,
\mathbf{V}^{5}_{5}
= [\mathbf{v}^{5}_{e} \hspace{3mm}  \mathbf{v}^{5}_{f}].
\end{eqnarray*}
At $\text{R}_{1}$, the interfering signals can be aligned into a $3$-dimensional signal subspace since $\text{T}_{2}$ and $\text{T}_{3}$ have a common beamforming vector $\mathbf{v}^{5}_{c}$ which $\text{T}_{1}$ does not have. Although $\text{T}_{1}$ and $\text{T}_{2}$ have a common beamforming vector $\mathbf{v}^{5}_{b}$ and they interfere at $\text{R}_{2}$ and $\text{R}_{1}$, respectively, the received signals of $\mathbf{v}^{5}_{b}$ occupy an independent signal subspace at $\text{R}_{2}$ and $\text{R}_{1}$ by changing the preset modes at these receivers. In order to make the signal and interference subspaces independent at $\text{R}_{1}$ and $\text{R}_{2}$ by changing the preset modes, $\mathbf{v}^{5}_{b}$ should have at least two non-zero entries. In addition, any five of the beamforming vectors should be linearly independent. Under the above restrictions, we can pick the following beamforming vectors
\begin{eqnarray*}
&&\hspace{-5mm}\mathbf{v}^{5}_{a}=[1\hspace{2mm}0\hspace{2mm}0\hspace{2mm}0\hspace{2mm}0]^{T}, \,\,\,\,\,\,\,\,\,\, 
\mathbf{v}^{5}_{b}=[0\hspace{2mm}1\hspace{2mm}1\hspace{2mm}0\hspace{2mm}0]^{T}, \,\,\,\,\,\,\,\,\,\, 
\mathbf{v}^{5}_{c}=[0\hspace{2mm}0\hspace{2mm}0\hspace{2mm}1\hspace{2mm}0]^{T}, \\ \nonumber
&&\hspace{-5mm}\mathbf{v}^{5}_{d}=[0\hspace{2mm}0\hspace{2mm}0\hspace{2mm}0\hspace{2mm}1]^{T}, \,\,\,\,\,\,\,\,\,\, 
\mathbf{v}^{5}_{e}=[0\hspace{2mm}1\hspace{2mm}0\hspace{2mm}0\hspace{2mm}0]^{T},\,\,\,\,\,\,\,\,\,\, 
\mathbf{v}^{5}_{f}=[1\hspace{2mm}0\hspace{2mm}1\hspace{2mm}1\hspace{2mm}1]^{T}.
\end{eqnarray*}

Moreover, we select the preset modes of $\text{R}_{1}$ and $\text{R}_{2}$ as
\begin{eqnarray}
\mathbf{L}^{5}_{1}=\mathbf{L}^{5}_{2}= [1\hspace{2mm}1\hspace{2mm}2\hspace{2mm}1\hspace{2mm}1].
\end{eqnarray}
where the preset mode changes when the
common beamforming vector of an interferer and the corresponding transmitter, i.e., $\mathbf{v}^{5}_{b}$, has non-zero entries. Thus, $\text{R}_{1}$ and $\text{R}_{2}$ can decode $s_{1,2}$ and $s_{2,1}$ since they have two independent equations at time slots 2 and 3. They can decode $s_{1,1}$ and $s_{2,2}$ at time slots 1 and 4, respectively.
The other receivers $\text{R}_{3}$, $\text{R}_{4}$, and $\text{R}_{5}$ do not need to change their preset mode since vertices $3$, $4$, and $5$ have no internal conflicts, thus their corresponding transmitters have no common beamforming vectors with interferers. 
They can decode their desired symbols by appropriate signal subtractions.
Therefore, the linear symmetric DoF of $\frac{\Delta_{\min}+1}{2\Delta_{\min}+3}=\frac{2}{5}$ is achievable in this network.


\end{ex}

\begin{ex}\label{ex3}

\begin{figure}[t]
    \centerline{\includegraphics[width=8.0cm]{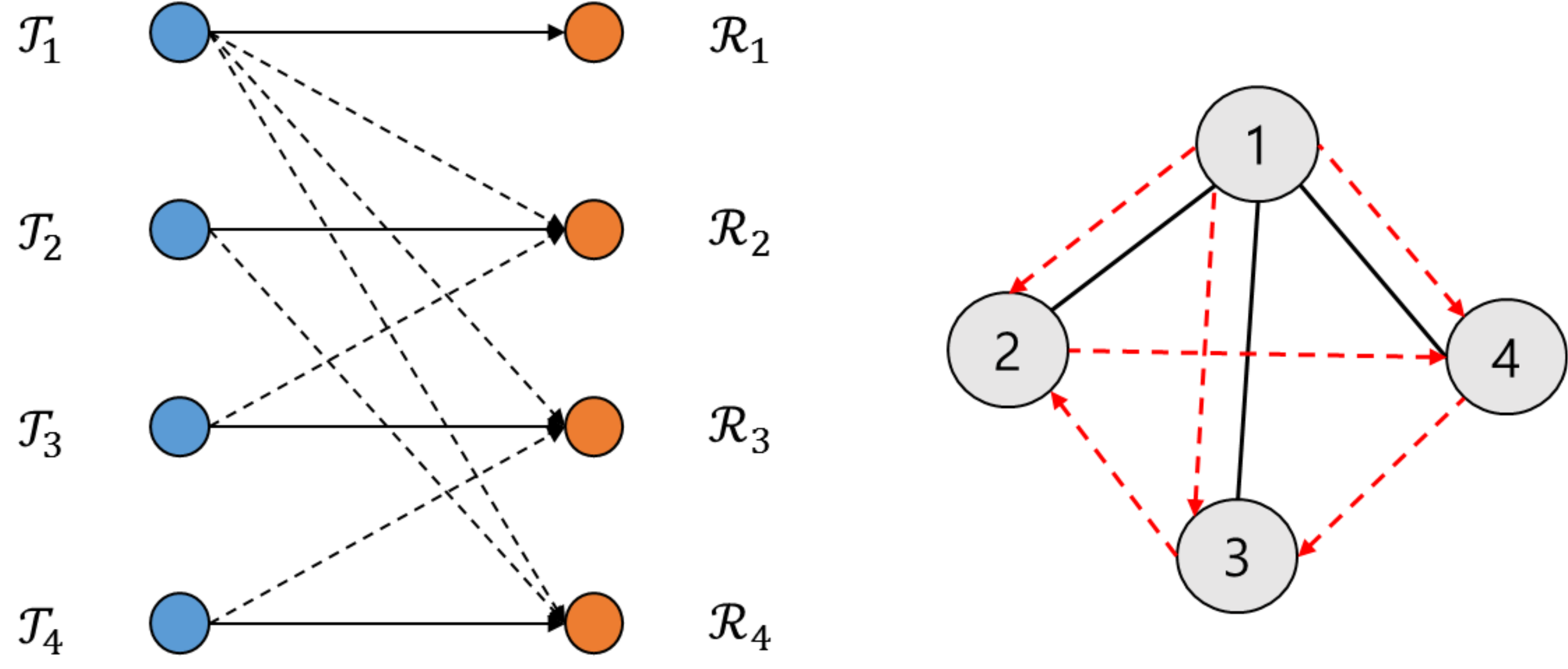}}
    \caption{A $4$-user network topology and its corresponding alignment and conflict graphs.}
    \label{DoF38}
    \vspace{-3mm}
\end{figure}

As the final example, consider the network topology in Fig. \ref{DoF38}. As the alignment and conflict graphs in Fig. \ref{DoF38} illustrate, vertex 1 is a fork since it is connected to vertices 2, 3, and 4 with alignment edges. According to Theorem 2, the linear symmetric DoF for this network topology is upper-bounded by $\textrm{min}\left(\frac{2}{5},\frac{3}{8}\right)=\frac{3}{8}$, since the minimum conflict distance $\Delta_{\textrm{min}}$ is equal to 1 and there exists an odd-length internal conflict cycle of length $L_{\text{min,odd}}=3$ among nodes $[4,3,2]$.

In this network topology, the upper bound of $\frac{3}{8}$ on the linear symmetric DoF can be achieved if the beamforming matrices are determined as
\begin{eqnarray}
\mathbf{V}^{8}_{1}=[\mathbf{v}^{5}_{a}\,\,\mathbf{v}^{5}_{b}\,\,\mathbf{v}^{5}_{c}], \,\,\,\,\,\,
\mathbf{V}^{8}_{2}=[\mathbf{v}^{5}_{a}\,\,\mathbf{v}^{5}_{d}\,\,\mathbf{v}^{5}_{e}], \,\,\,\,\,\,
\mathbf{V}^{8}_{3}=[\mathbf{v}^{5}_{b}\,\,\mathbf{v}^{5}_{f}\,\,\mathbf{v}^{5}_{g}], \,\,\,\,\,\,
\mathbf{V}^{8}_{4}=[\mathbf{v}^{5}_{c}\,\,\mathbf{v}^{5}_{h}\,\,\mathbf{v}^{5}_{i}].
\end{eqnarray}
Using these beamforming matrices, interfering signals are aligned into 5-dimensional subspaces and the desired signals occupy 3-dimensional signal subspaces independent of the interference signal subspaces. This implies that a linear symmetric DoF of $\frac{3}{8}$ is achievable, which meets the upper bound of Theorem \ref{thm2}.
\end{ex}

\section{Concluding Remarks}
\label{sec_con}
In this paper, we considered the problem of topological interference management with reconfigurable antennas at the receivers. We characterized the network topologies in which half linear symmetric DoF is achievable. We also derived the linear symmetric DoF upper bound in terms of the minimum conflict distance among internal conflicts towards the vertices that have at least two incoming internal conflicts and the minimum odd length of internal conflict cycles. This upper bound is shown to be achievable if there are no forks in the alignment graph of the network. However, if there is a fork in the alignment graph of a network topology, the linear symmetric DoF upper bound is not always tight. Thus, an interesting future direction would be to characterize the linear symmetric DoF for arbitrary network topologies with reconfigurable antennas.

Another interesting topic to study on the problem of topological interference management with reconfigurable antennas at the receivers is to characterize the sum-DoF of the network topologies, in contrast to the symmetric DoF metric that we considered in the paper. As an example, in a \emph{one-to-many} topology where a single transmitter interferes at all unintended receivers, turning that transmitter-receiver pair off would possibly maximize the sum-DoF of the network, whereas such a scheme may be suboptimal for the case when the desired performance metric is the symmetric DoF. We leave the problem of characterizing the sum-DoF of the network topologies with reconfigurable receiver antennas as a future work.

\section*{Acknowledgments}
\addcontentsline{toc}{section}{Acknowledgment}
This work is in part supported by JPL grant RTD RSA No. 1537733, NSF grants CAREER 1408639, NETS-1419632, EARS-1411244, ONR award N000141612189, Basic Science Research Program (NRF-2013R1A1A2008956 and NRF-2015R1A2A1A15052493) through NRF funded by MOE and MSIP, Technology Innovation Program (10051928) funded by MOTIE, Bio-Mimetic Robot Research Center funded by DAPA (UD130070ID), INMAC, and BK21-plus.


\end{document}